\LetLtxMacro{\oldleft}{\left}
\LetLtxMacro{\oldright}{\right}
\renewcommand{\left}{\mleft}
\renewcommand{\right}{\mright}
\definecolor{citec}{HTML}{324FDA} 
\definecolor{linkc}{HTML}{A0111A}
\definecolor{urlc}{HTML}{b55c87}
\tikzset{
  treenode/.style = {align=center, inner sep=1pt, text centered,
    font=\sffamily},
  arn_n/.style = {treenode, rectangle, white, font=\sffamily\bfseries, draw=black,
    fill=black, text width=1.0em},
  my/.style = {treenode, 
    minimum width=0.5em, minimum height=0.5em}
}
\newtheorem{lemma}{Lemma}[section]
\newtheorem{theorem}[lemma]{Theorem}
\newtheorem{definition}[lemma]{Definition}
\newtheorem{claim}[lemma]{Claim}
\newtheorem{observation}[lemma]{Observation}
\newtheorem{remark}{Remark}
\newtheorem{goal}{Goal}
\newtheorem{question}{Question}
\renewenvironment{proof}[1][]{\begin{trivlist}
\item[\hspace{\labelsep}{\bf\noindent Proof#1:\/}] }{\qed\end{trivlist}} \renewcommand{\qed}{\hfill{\rule{2mm}{2mm}}}
\newcommand{\cC}{\mathcal{C}}
\newcommand{\cV}{\mathcal{V}}
\newcommand{\cB}{\mathcal{B}}
\newcommand{\cS}{\mathcal{S}}
\newcommand\R{{\mathbb{R}}}
\newcommand\F{{\mathbb{F}}}
\newcommand{\N}{{\rm N}}
\newcommand{\cD}{\mathcal{D}}
\newcommand{\cL}{\mathcal{L}}
\newcommand{\moment}{{r}}
\DeclareMathOperator{\poly}{poly}
\DeclareMathOperator*{\E}{\mathbb{E}}
\DeclareMathOperator*{\Var}{Var}
\newcommand\Tr{{\mathop\textup{Tr}}}
\newcommand{\set}[1]{\left\{{#1}\right\}}
\newcommand{\B}{\set{0,1}}
\newcommand{\eps}{\epsilon}
\renewcommand{\mod}{{\rm mod}}
\renewcommand{\eps}{\varepsilon}
\newcommand{\ind}{\mathbf{1}}
\newcommand{\weight}{\mathsf{weight}}
\newcommand{\cin}{\mathcal{C}_\mathrm{in}}
\newcommand{\cout}{\mathcal{C}_\mathrm{out}}
\newcommand{\code}{\mathcal{C}}
\newcommand{\tilc}{\tilde{c}}
\newcommand{\norm}[1]{\left\| #1 \right\|}
\newcommand\dean[1]{\textcolor{cyan}{Dean:~#1}}
\newcommand\mkw[1]{\textcolor{orange}{Mary:~#1}}
\newcommand\ignore[1]{}
\newcommand{\binomial}{\ensuremath{\mathsf{Binomial}}}
\newcommand{\poisson}{\ensuremath{\mathsf{Poisson}}}
\newcommand{\bernoulli}{\ensuremath{\mathsf{Bernoulli}}}
\newcommand{\poi}{\ensuremath{\mathsf{Poi}}}
\newcommand{\ber}{\ensuremath{\mathsf{Ber}}}
\newcommand{\uniform}{\ensuremath{\mathsf{U}}}
\title{When Do Low-Rate Concatenated Codes Approach
The Gilbert--Varshamov Bound?}
\author{Dean Doron\thanks{Ben-Gurion University of the Negev. \texttt{deand@bgu.ac.il}. Supported in part by NSF-BSF grant \#2022644.} \and Jonathan Mosheiff\thanks{Ben-Gurion University of the Negev. \texttt{mosheiff@bgu.ac.il}. Supported by an Alon Fellowship.} \and Mary Wootters\thanks{Stanford University. \texttt{marykw@stanford.edu}.  Partially supported by NSF grants CCF-2231157 and CCF-2133154.}}
\date{}
\begin{document}

 \maketitle

\begin{abstract}
The \emph{Gilbert--Varshamov} (GV) bound is a classical existential result in coding theory.  It implies that a random linear binary code of rate $\eps^2$ has relative distance at least $\frac{1}{2} - O(\eps)$ with high probability.  However, it is a major challenge to construct \emph{explicit} codes with similar parameters.

One hope to derandomize the Gilbert--Varshamov construction is with code concatenation: We begin with a (hopefully explicit) outer code $\cout$ over a large alphabet, and concatenate that with a small binary random linear code $\cin$.  It is known that when we use \emph{independent} small codes for each coordinate, then the result lies on the GV bound with high probability, but this still uses a lot of randomness.  In this paper, we consider the question of whether code concatenation with a \emph{single} random linear inner code $\cin$ can lie on the GV bound; and if so what conditions on $\cout$ are sufficient for this.

We show that first, there \emph{do} exist linear outer codes $\cout$ that are ``good'' for concatenation in this sense (in fact, \emph{most} linear codes codes are good). We also provide two sufficient conditions for $\cout$, so that if $\cout$ satisfies these, $\cout \circ \cin$ will likely lie on the GV bound.  We hope that these conditions may inspire future work towards constructing explicit codes $\cout$.
\end{abstract}
\thispagestyle{empty}

\newpage

\section{Introduction}\label{sec:intro}

An \emph{error correcting code} (or just a \emph{code}) is a subset $\cC \subseteq \Sigma^n$, for some alphabet $\Sigma$.   
We think of a code $\cC$ being used to encode messages in $\Sigma^k$ for $k = \log_{|\Sigma|}|\cC|$.
That is, for any $m \in \Sigma^k$, we can identify $m$ with a codeword $\cC(m) \in \cC$.\footnote{Here and throughout the paper, we will abuse notation and use $\cC$ both as the code itself (a subset of $\Sigma^n$) and also as an encoding map $\cC \colon \Sigma^k \to \Sigma^n$.}  The idea is that encoding $m$ into the codeword $\cC(m)$ will introduce redundancy that can later be used to correct errors.
In this work we focus on \emph{linear} codes $\cC$, which are codes where $\Sigma = \F$ is a finite field and $\cC \subseteq \F^n$ is a linear subspace of $\F^n$.

Two important properties of error correcting codes are the \emph{rate} $R$ and the \emph{relative distance} $\delta$.  For a code $\cC \subseteq \Sigma^n$, the rate is defined as
$R = \frac{\log_{|\Sigma|}|\cC|}{n} = \frac{k}{n},$
and it quantifies how large the code is.  The rate is between $0$ and $1$, and typically we want it to be as close to $1$ as possible; this means that  the encoding map does not introduce much redundancy.  The (relative) distance of $\cC\subseteq \Sigma^n$ is defined as
$\delta = \frac{1}{n} \min_{c \neq c' \in \cC} \Delta(c,c')$,
where $\Delta(\cdot, \cdot)$ is Hamming distance.  Again, the relative distance is between $0$ and $1$, and again we typically want it to be as close to $1$ as possible; this means that the code can correct many worst-case errors.

These two quantities---rate and distance---are in tension.  The larger the rate is, the smaller the distance must be.  For binary codes (that is, codes where $\Sigma = \F_2$), it is a major open question to pin down the best trade-off possible between rate and distance.  However, we know that good trade-offs are possible: 
The best known possibility result in general is the \emph{Gilbert--Varshamov} (GV) bound (\cref{thm:GV}).  

In this paper we focus on \emph{low rate} codes.  In this parameter regime, the GV bound implies that there \emph{exist} binary linear codes
with relative distance $\frac{1-\eps}{2}$ and rate $\Omega(\eps^2)$, for small $\eps > 0$. In fact, Varshamov's proof shows that a random binary linear code achieves this with high probability. 

Constructing such codes explicitly, hopefully 
accompanied by an efficient decoding algorithm, has been subject to extensive and 
fruitful research in the past decades (e.g., \cite{NN90,ABNNR92,AGHP92,BT13,GI05,T17,BD22}), with several exciting 
breakthroughs in recent years. 
These breakthroughs include explicit constructions of codes with distance $\delta = \frac{1-\eps}{2}$ and rate $R = \Omega(\eps^{2 + o(1)})$, even with efficient algorithms (see \cref{sec:related}).
However, there are still open questions.  For example, we do not know how to attain $\delta = \frac{1 - \eps}{2}$ and $R = \Omega(\eps^2)$ (without any $o(1)$ term) explicitly, and we do not have explicit constructions approaching the GV bound with rates bounded away from zero. 
Motivated by these questions, we consider \emph{concatenated codes}, possibly with some randomness, which we discuss next.

\paragraph{Concatenated Codes, and Our Question.} 
A natural candidate for explicit (for low randomness) codes on the GV bound are \emph{concatenated linear codes.}  These codes are built out of two ingredients: a (hopefully explicit) linear outer code $\cout \subseteq \F_q^n$ with dimension $k$ for some large $q$; and a smaller inner binary linear code $\cin \subseteq \F_2^{n_0}$, with dimension $k_0 = \log_{2}q$.  We define the concatenated code $\cC = \cout \circ \cin \subseteq \F_2^{n_0 \cdot n}$ by first encoding a message $m \in \F_q^k$ (which can also be thought of as $m \in \F_2^{k_0 \cdot k}$) with $\cout$.  Then, we encode each symbol of the resulting codeword using $\cin$.  That is, for a message $m$,
\[ \cC(m) =  (\cin(\cout(m)_1), \cin(\cout(m)_2), \cdots, \cin(\cout(m)_n) ) \in \F_2^{n_0 \cdot n}.\]
It is not hard to see that the rate of $\cC$ is the product of the rates of $\cin$ and $\cout$, and that the distance of $\cC$ is \emph{at least} the product of the distances of $\cin$ and $\cout$.

The natural approach to constructing a good concatenated code is to choose $\cout$ and $\cin$ with the best known trade-offs: Since $\cout$ is over a large alphabet, we know explicit constructions of codes with optimal rate-distance trade-off\footnote{For codes over large alphabets, the best possible trade-off is the \emph{Singleton bound}, or $R = 1 - \delta$.  This is achievable, for example, by Reed--Solomon codes.}; and if $n_0$ is sufficiently small, we can find a $\cin$ on the GV bound either deterministically by brute force or else with low randomness, depending on the size of $n_0$.

However, in general this approach will not achieve the GV bound.  
 If we do not assume any additional properties of $\cout$ and $\cin$, and simply use the concatenation properties, 
then setting the parameters so that $\code = \cout \circ \cin$ has distance
$\frac{1-\eps}{2}$, the rate of $\code$ will be at most roughly $\eps^3$.
This is known as the \emph{Zyablov bound} \cite{Zyablov} (see also \cite{GRS}).  
As we discuss more in \cref{sec:related}, concatenation has been a popular approach to obtain fully explicit codes with good rate-distance trade-offs, but none of these constructions are known to beat the Zyablov bound.

Instead of using a \emph{single} inner code, several works have focused on a related construction originally due to  Thommesen \cite{Thom}, which uses multiple inner codes.  More precisely, this construction uses i.i.d.\ random linear inner codes for each coordinate.  
It can be shown~\cite{Thom} that the resulting code does lie on the GV bound with high probability, and if $\cout$ is chosen appropriately there are even efficient decoding algorithms for it~\cite{GI04,atrithesis, HRZW19}. 
However, this approach relies heavily on the fact that the inner codes are independent, and as a result uses a lot of randomness.  

This state of affairs motivates the following question (also asked in the title of this paper):
\begin{question}\label{q:main}
 Are there concatenated linear codes $\cout \circ \cin$ (with a \emph{single} random linear inner code $\cin$) that meet the GV bound with high probability over $\cin$?\footnote{Of course, if the length of either the inner code or the outer code is $1$, this question reduces to the non-concatenated setting; we are interested in parameter regimes where $n_0$ is non-trivial.}  If so, are there sufficient conditions on $\cout$ that will guarantee this?
\end{question}

In this paper, we show that \emph{yes}, there are concatenated codes that meet the GV bound, and we also give two sufficient conditions on $\cout$ for this to hold.  Our existential result is non-constructive, but it is our hope that our sufficient  conditions will lead to explicit constructions of appropriate $\cout$-s, which would lead to explicit (or at least pseudo-random, depending on the alphabet size of $\cout$) concatenated codes on the GV bound. 

\begin{remark}[Motivation for \cref{q:main}]
Above, we have motivated \cref{q:main} as an avenue towards explicit or pseudo-random binary codes on the GV bound, and indeed this is our original motivation.  But we point out that \cref{q:main} is also interesting in its own right.  Concatenated codes are a classical construction, going back to the 1960's~\cite{forney}, and have been used in many different settings over the decades.  It seems like a fundamental question to understand when these codes can attain the GV bound.
\end{remark}

\begin{remark}[Focus on Linear Codes]
    In \cref{q:main} and in this paper, we focus on \emph{linear} codes.  This is because if we used, say, a uniformly random non-linear code as the inner code, it would require exponentially more randomness than a random linear inner code, so this does not seem like a hopeful avenue for derandomization.  We note however that the question is much easier for non-linear codes.  For example, suppose that $\cout$ is a Reed--Solomon code of rate $\eps$ so that each symbol is additionally tagged with its evaluation point: that is, the symbol corresponding to $\alpha \in \F_q$ is $(\alpha, f(\alpha)) \in \F_q^2$.  For the inner code, we use a completely random (non-linear) code of rate $\eps$.  Then since all of the symbols in each outer codeword are different by construction, each codeword is essentially uniformly random, and it is not hard to show that the result is close to the GV bound in the sense that a code of rate $O(\eps^2)$ will have distance $1/2 - O(\eps)$ with high probability.  This same argument will not work when $\cin$ is linear, since the different symbols of codewords of $\cout$ will still have $\F_2$-linear relationships.
\end{remark}

\paragraph{Our Contributions.}
Our main results are:
\begin{enumerate}
    \item \textbf{Existence of concatenated codes on the GV bound.} We answer the first part of \cref{q:main}: there \emph{are} concatenated codes $\cout \circ \cin$ that achieve the GV bound, in a wide variety of parameter regimes.  In particular, we show that \emph{most} codes $\cout$ are actually good:
    \begin{theorem}[Informal;  \cref{thm:main-random}]\label{thm:main-intro}
Suppose that $\cout \subseteq \F_q^n$ and $\cin \subseteq \F_2^{n_0}$ are random linear codes of rate $\eps$, so that $q \geq 2^{\Omega(\eps^{-3})}$.  Then $\cC = \cout \circ \cin$ has rate $\eps^2$, and with high probability, the relative distance of $\cC$ is at least $1/2 - O(\eps)$.
\end{theorem}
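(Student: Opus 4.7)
The plan is to union-bound over the $q^{\eps n} - 1$ nonzero messages $m \in \F_q^k$: for each, I would prove that $\Pr[|\cC(m)| < (1/2 - O(\eps)) n n_0]$ is at most $2^{-\Omega(\eps n k_0)}$ over the joint randomness of $\cout$ and $\cin$. Multiplying by $q^{\eps n} = 2^{k_0 \eps n}$ then gives total failure probability $o(1)$ provided the constant hidden in $1/2 - O(\eps)$ is taken large enough (the rate $\eps^2$ is automatic). The key reparametrization identifies $\F_q \cong \F_2^{k_0}$ and writes $\cC(m) = GV$ as an $n_0 \times n$ matrix over $\F_2$, where $G \in \F_2^{n_0 \times k_0}$ is the generator of $\cin$ and $V \in \F_2^{k_0 \times n}$ has the symbols of $\cout(m)$ as columns. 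Over the two random-code constructions, $G$ has i.i.d.\ uniform entries, and for any fixed nonzero $m$ the columns of $V$ are i.i.d.\ uniform in $\F_2^{k_0}$ and independent of $G$, so we must bound the Hamming weight of a product of two independent uniform random matrices.

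The core analysis conditions on $\cin$. Given $G$, the columns of $GV$ are i.i.d.\ over the remaining randomness, each being a uniform codeword of $\cin$. So $|\cC(m)| = \sum_{i=1}^n X_i$ is a sum of $n$ i.i.d.\ bounded variables, with conditional moment-generating function $\bigl(W_\cin(e^{-\lambda})/|\cin|\bigr)^n$, where $W_\cin$ is the weight enumerator of $\cin$. A second-moment computation exploiting that any two distinct nonzero vectors in $\F_2^{k_0}$ are automatically $\F_2$-linearly independent---and hence their $G$-images are jointly uniform in $(\F_2^{n_0})^2$ for random $G$---shows that $W_\cin(e^{-\lambda})/|\cin|$ concentrates around its ideal Binomial value $\bigl((1 + e^{-\lambda})/2\bigr)^{n_0}$ (the MGF of $\bin(n_0, 1/2)$) with relative standard deviation $O(2^{-k_0/2})$. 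On the resulting high-probability ``good $\cin$'' event, $|\cC(m)|$ satisfies essentially the same Chernoff bound as $\bin(n n_0, 1/2)$:
\[ \Pr[|\cC(m)| < (1/2 - O(\eps)) n n_0 \mid G] \leq 2^{-\Omega(\eps^2 n n_0)} = 2^{-\Omega(\eps n k_0)}, \]
using $n_0 = k_0/\eps$, which beats the union-bound factor $2^{k_0 \eps n}$ by a constant in the exponent.

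The main obstacle is quantifying the MGF concentration precisely enough: one must show that the per-nonzero-vector $O(2^{-k_0/2})$ fluctuation does not blow up when raised to the $n$-th power in the conditional MGF, i.e., that it stays far below the $\Omega(\eps^2 n n_0)$ Chernoff exponent. The hypothesis $q = 2^{k_0} \geq 2^{\Omega(\eps^{-3})}$ is exactly calibrated for this: it forces the relative error to be at most $e^{-\Omega(\eps^{-3})}$, which stays negligible throughout. Combining the ``bad $\cin$'' probability $2^{-\Omega(k_0)}$ with the union-bounded per-message bound $q^{\eps n} \cdot 2^{-\Omega(\eps n k_0)}$ completes the argument.
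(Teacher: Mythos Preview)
Your approach is genuinely different from the paper's. The paper does \emph{not} union-bound over messages with a per-message Chernoff tail; instead it fixes $\cin$, takes the $r$-th moment $\E_{m}[X_m^r]$ of the bias over a \emph{random message}, rewrites this moment Fourier-analytically as a sum over tuples $\cV$ involving $\ind[g_\cV\in\cout^\perp]$ (Lemma~3.3), and only then averages over the random $\cout$ and applies Markov. Their main technical work is a combinatorial count of the tuples $\cV$ with $g_\cV=0$ (Claim~4.3). Your route---fix $m$, use that $\cout(m)$ is uniform in $\F_q^n$, condition on $\cin$, and Chernoff the sum of i.i.d.\ codeword weights---is more elementary and, if it works, arguably cleaner for this specific random-$\cout$ statement; however it does not yield the deterministic sufficient conditions on $\cout$ that are the paper's real goal.

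There is one genuine gap in your outline. You assert that $W_{\cin}(e^{-\lambda})/|\cin|$ concentrates around the ideal Binomial MGF $A(\lambda)=\bigl((1+e^{-\lambda})/2\bigr)^{n_0}$. But the zero codeword is always in $\cin$, so deterministically
\[
\frac{W_{\cin}(e^{-\lambda})}{|\cin|}=\frac{1}{q}+\frac{1}{q}\sum_{v\neq 0}e^{-\lambda|Gv|},
\]
and its mean is $\tfrac1q+(1-\tfrac1q)A(\lambda)$, not $A(\lambda)$. The additive $1/q$ is \emph{not} negligible at the $\lambda$ you implicitly want: the ideal-Binomial Chernoff optimum is $\lambda=4C\eps$ with $C>\sqrt{\ln 2/2}\approx 0.59$, but for any $C>\ln(2)/2\approx 0.35$ one has $A(\lambda)\approx e^{-2C\eps n_0}\ll 2^{-\eps n_0}=1/q$, so the $1/q$ term dominates and the ``essentially $\bin(N,1/2)$'' Chernoff bound you claim does not follow. (This is exactly the obstacle the paper flags when it says the naive per-codeword union bound ``does not have enough concentration''.) The fix is to optimize Chernoff with the \emph{correct} MGF $\tfrac1q+A(\lambda)$; the optimum then sits near $\lambda\approx 2\eps\ln 2$, and one gets a bound of the form $2^{-2C\eps^2 N}\cdot(1/\eps)^{O(n)}$, which still beats $q^k=2^{\eps^2 N}$ for $C>1/2$ once $k_0\gtrsim \log(1/\eps)/\eps$. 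So your plan is salvageable, but the specific concentration claim as written is false and must be repaired.
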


While \cref{thm:main-intro} seems intuitive (in the sense that a random linear code lies on the GV bound with high probability, so why not concatenated random linear codes?), to the best of our knowledge it has not appeared in the literature before, and the proof was not obvious (to us).\footnote{We note that earlier work by Barg, Justesen and Thomessen~\cite{BJT01} also addresses random linear outer codes concatenated with an arbitrary (fixed) inner code, using very different techniques than we do.  They do not explicitly state a statement like \cref{thm:main-intro} above, though it is plausible that their techniques could be used to prove something similar.  We discuss their techniques and the relationship to our work in \cref{sec:related}.}  One challenge is that a codeword $c \in \cout \circ \cin$ is not uniformly random in $\F_2^{N}$.  In particular, the natural strategy of ``show that each non-zero codeword has high weight with high probability and union bound'' that is used to establish the Gilbert--Varshamov bound will not work in this setting, as we do not have enough concentration.

    \item \textbf{Sufficient conditions for $\cout$.}
    Our existence result above uses a random linear code as the outer code, which does not help in the quest for explicit constructions.  However, our proof techniques inspire two sufficient conditions on $\cout$.  That is, if $\cout$ satisfies these conditions, then $\cout \circ \cin$ will meet the GV bound with high probability when $\cin$ is a random linear code.  Our hope is that formalizing these will lead to explicit constructions in the future.

    We give an overview and intuition for our two sufficient conditions here.  We note that both conditions are only sufficient when the alphabet size $q$ for $\cout$ is suitably large (exponential in $1/\poly(\eps)$); see \cref{thm:suff} and \cref{main-entropy} for details.
    \begin{itemize}
        \item \textbf{Sufficient Condition 1: A soft-decoding-like condition on $\cout^\perp$.}
        Our first sufficient condition, formalized in \cref{thm:suff}, is a soft-list-decoding-like condition on $\cout^\perp$.  More precisely, we define a distribution $\cD$\footnote{
The distribution $\cD$ is intuitively defined as follows.  Let $\cin$ be the inner code, and suppose that it has a generator matrix $G_0 \in \F_2^{n_0 \times k_0}$. 
 Then to sample from $\cD$, we take a random sparse linear combination of the rows of $G_0$ (over $\F_2$), and interpret the result in $\F_2^{k_0}$ as an element of $\F_q$, which we return.} 
 on the alphabet $\F_q$; the condition is that
 \begin{equation}\label{eq:cond_intro}
        \Pr_{x \sim \cD^n}[ x \in \cout^\perp \setminus \{0\} ] \leq \frac{1}{q^k}(1 + \Delta)
        \end{equation}
        for some small $\Delta$.
        Note that $1/q^k$ is the probability that a completely random vector is in $\cout^\perp$,
        so this condition is saying that if the coordinates of $x$ are drawn i.i.d.\ from the same distribution $\cD$, then $x$ not much more likely to be in $\cC^\perp$ than in a uniformly random vector.
        We show that if this holds, then $\cout \circ \cin$ lies on the GV bound with high probability over the choice of a random linear inner code $\cin$.

 It's not hard to see (\cref{rem:exists}) that this condition holds in expectation for a random linear code $\cout$, and in particular there exist linear codes $\cout$ that have this property.

        This condition is reminiscent of $\cout^\perp$ being list-decodable from soft information (e.g., \cite{KV03}).  In soft-list-decoding, one typically gets a distribution $\cD_i$ for each $i \in [n]$, interpreted as giving ``soft information'' about the $i$'th symbol. If one can show that a vector drawn from $\cD_1 \times \cdots \times \cD_n$ is unlikely to be in the code, this implies that there are not too many codewords that are likely given the soft information we hare received.
        However, there are several differences between existing work on soft list-decoding and our work, notably that our distribution $\cD$ is a particular one and is the same for all $i$, and also there are some differences in the parameter settings.

        This condition can also be seen as a soft form of \emph{list-recovery}, where we have the same list in each coordinate.\footnote{Informally, a code $\cC \subseteq \Sigma^n$ is said to be list-recoverable if for any small sets $S_1, \ldots, S_n \subseteq \Sigma$, there are not too many codewords $c \in \cC$ so that $c_i \in S_i$ for many values of $i$.}  In more detail, if the support of $\cD$ is concentrated on a small set $S$ (which ours is for reasonable settings of $n_0, \eps$, see \cref{rem:listrec}), then the condition in \cref{thm:suff} is related to asking that the number of codewords that lie in the combinatorial rectangle given by $S \times S \times \cdots \times S$ is about what it should be.  Unfortunately, the definition of ``small'' here does not seem to be small enough for existing constructions of list-recoverable codes (for example folded RS codes or multiplicity codes) to yield any results.
       
        \item \textbf{Sufficient Condition 2: $\cout$ has good min-entropy.} Our second sufficient condition, formalized in \cref{main-entropy}, requires the codewords of $\cout$ to be ``smooth'', meaning, roughly, that every nonzero codeword has a fairly uniform distribution of symbols from $\F_q$. To illustrate why a smoothness condition is desirable, let us consider two extreme cases.

        The bad extreme is when there exists a codeword $c$ that is supported on very few symbols, say even on a single symbol. If $c = (\sigma, \sigma, \ldots, \sigma)$ for some $\sigma \in \F_q$, then the relative weight of $c \circ \cin$, for a random binary inner code $\cin$ of rate $\eps$, might be $\frac{1}{2}-\Omega(\sqrt{\eps})$, much worse than the $\frac{1}{2} - O(\eps)$ that we would want for the GV bound.  

          The good (possibly unrealistic) extreme is where each nonzero codeword of $\cout$ has a symbol distribution that is \emph{uniform} over $\F_q$.  In this case it is not hard to see that $\cout \circ \cin$ will be close to the GV bound with high probability over a random linear code $\cin$.  (For this, all we need is that $\cin$ has about the ``right'' weight distribution, which a random linear code will have with high probability).

          The natural question is thus \emph{how smooth} the codewords of $\cout$ should be in order for $\code$ to have distance $\frac{1}{2}-O(\eps)$. In \cref{sec:minent}, we quantify this by the \emph{smooth min-entropy} of the codewords' empirical distributions on symbols.  We show in \cref{main-entropy} that if this smooth min-entropy is large enough for all $c \in \cout$, then $\cC = \cout \circ \cin$ is likely to lie near the GV bound when $\cin$ is a random linear binary code.

        How large is ``large enough''?
          For this informal discussion, we give one example of the parameter settings from \cref{main-entropy}: It is enough for every non-zero codeword $c \in \cout$ to have a symbol distribution that has $\Theta(\eps n)$ copies of the same symbol (say, the zero symbol), while the remaining symbols in $c$ are uniformly distributed over a set of size only $q^{1 - \eps}$.  By some metrics this is still a fairly ``spiky'' distribution, but it is ``smooth enough'' for our purposes.

        Note that while our soft-decoding-like condition considers $\cout^{\perp}$, our smooth min-entropy condition here considers $\cout$ itself.
    \end{itemize}
\end{enumerate}

\subsection{Related Work}\label{sec:related}

\paragraph{Explicit Concatenated Codes.}
Concatenation (with a single inner code) has been a common approach to obtain explicit codes close to the GV bound.  Here we mention a few such places this comes up. 
 Choosing $\cout$ to be the Reed--Solomon code, and $\cin$ to be the Hadamard code, gets a code of length $O(k^2/\eps^2)$ for any dimension $k$ \cite{AGHP92}, and replacing Reed--Solomon with the Hermitian code gets length $O((k/\eps)^{5/4})$ \cite{BT13}. Choosing a  different AG code for $\cout$ can result in non-vanishing rate and in fact approach rate $\eps^3$ (see~\cite{T17}).  Moreover,
concatenating Reed--Solomon with the Wozencraft ensemble gives the \emph{Justesen} code \cite{Jus}, having constant relative rate and constant relative distance. Note that none of these concatenation-based constructions thus far have beat the Zyablov bound.

\paragraph{Concatenated Codes with Random Linear $\cout$.}  
Relevant to \cref{thm:main-intro}, \cite{BJT01} studies a random linear code $\cout$ concatenated with a \emph{fixed} inner code $\cin$.  (See also \cite{BM10}, which applies the same techniques for an application in compressive sensing).  The work \cite{BJT01} derives bounds on the distance of $\cout \circ \cin$ in terms of (moments of) the weight distribution of $\cin$.  These bounds imply that $\cout \circ \cin$ approaches the GV bound in some cases, but doesn't seem to immediately imply \cref{thm:main-intro}. 

Before discussing their techniques more, we note that the biggest difference between \cite{BJT01} and our work is that their question is about the behavior of random linear codes, and so naturally their approach crucially uses the fact that $\cout$ is random.  In contrast, the motivation for our work is to find deterministic sufficient conditions on $\cout$, and we invoke a random linear outer code as a proof of concept that our approach is realizable.

Next, we briefly describe the techniques and implications of \cite{BJT01}, relative to \cref{thm:main-intro}.
The key result of \cite{BJT01} is an expression of the limiting trade-off between the rate $R$ and the distance $\delta$ of $\cout \circ \cin$, in terms of the function $\phi(\tau) = \ln \mathbb{E}_X[e^{\tau X}]$, where $X$ is the weight of a random codeword from $\cin$ and where $\tau \leq 0$ parameterizes the trade-off.\footnote{In more detail, this trade-off is given by $R = \frac{1}{n_0 \ln(2)} (\tau \phi'(\tau) - \phi(\tau))$ and $\delta = \frac{\phi'(\tau)}{n_0},$ for $\tau \leq 0$.}
They show that this trade-off meets the GV bound when $\cin$ is the identity (trivial) code, and investigate how it behaves when $\cin$ is a non-trivial code.
Towards this, one can use their trade-off to work out the Taylor series for $R$ around $\delta = 1/2$.  
It is not hard to see that under mild conditions on $\cin$, the first two terms of this Taylor expansion vanish and hence we obtain $R = \Theta(\eps^2) + O_{\cin}(\eps^3)$ when $\delta = 1/2 - \eps$, where the $O_{\cin}(\cdot)$ notation hides constants that depend on $\cin$.  This implies that if $n_0$ is a constant, independent even of $\eps$, then $\cout \circ \cin$ approaches the GV bound.  However, if $n_0$ is growing relative to $\eps$ (which it is in our case, as we take $\cin$ to have rate $\eps$), then the ``constant'' terms hiding in the $O_{\cin}(\eps^3)$ term may depend on $n_0$, which in turn may depend on $\eps$.  It seems plausible that when $\cin$ is a random linear code, this dependence is mild\footnote{In particular, as pointed out in \cite{BJT01}, the first $d^\perp - 1$ terms of the Taylor series will agree with the GV bound, where $d^\perp$ is the dual distance of $\cin$, which for a random linear code $\cin$ is quite large.} and something like \cref{thm:main-intro} could be established with these techniques, but to the best of our knowledge such a proof has not appeared in the literature and does not seem to follow immediately.

\paragraph{Non-Concatenation-Based Explicit Constructions.}
As mentioned above, there have been several breakthroughs in the past few years obtaining explicit constructions of binary codes near the GV bound, and even efficient algorithms for them.
In a breakthrough result, Ta-Shma \cite{T17} constructed \emph{explicit} linear codes
of relative distance $\frac{1-\eps}{2}$ having rate $\eps^{2+o(1)}$. Ta-Shma's codes
are also \emph{$\eps$-balanced}, i.e., $\Delta(x,y) \in \left[ \frac{1-\eps}{2},\frac{1+\eps}{2} \right]$, and thus give rise to explicit $\eps$-biased sample spaces, which are ubiquitous in pseudorandomness and derandomization. Works that followed gave efficient \emph{decoding} of Ta-Shma codes and their variants \cite{alev2020list,JQST20,JST21,SS23,jeronimo2023list} (see also \cite{BD22} for a different, randomized, construction that slightly improves upon the rate of \cite{T17}, and admits efficient decoding).
We note that these codes are graph-based, and do not in general have a concatenated structure.

\paragraph{Results with Multiple i.i.d.\ Inner Codes.}
Thommesen showed that when the outer code is a Reed--Solomon code, and it is concatenated with $n$ different random linear codes, one for each coordinate, chosen independently, then the resulting code lies on the GV bound with high probability~\cite{Thom}.
Guruswami and Indyk devised efficient decoding algorithms for these codes, based on \emph{list-recoverability} of the outer code~\cite{GI04}. That work used a Reed--Solomon code as the outer code, which is list-recoverable up to the Johnson bound.  Later, Rudra~\cite{atrithesis} observed that the parameters could be improved by swapping out the Reed--Solomon code for a code that can be list-recovered up to capacity, for example a Folded Reed--Solomon code.  Later work obtained nearly-linear-time decoding algorithms by swapping out the outer code for a capacity-achieving list-recoverable code with near-linear-time list-recovery algorithms~\cite{HRZW19,KS20}. Codes with multiple i.i.d.\ inner codes have also been studied in \cite{Zyablov71,BZ73}. 

We also mention the work of Guruswami and Rudra~\cite{GR10}, who show that the same construction (a list-recoverable code concatenated with $n$ different i.i.d. random linear codes) is list-\emph{decodable} up to capacity with high probability.  In the results \cite{GI04,atrithesis,HRZW19,KS20} mentioned above, list-recovery of the outer code was needed for \emph{algorithms}, not the combinatorial result (which follows already from~\cite{Thom}).  In contrast, in \cite{GR10}, the list-recoverability of the outer code is needed for the combinatorial result itself.  In that sense, the flavor is similar to our sufficient condition in \cref{sec:soft}, although the techniques are very different, and in our work we only use one inner code.

\paragraph{Further Low-randomness Constructions of Binary Codes on GV Bound.}
If one's goal is to explicitly construct a binary code that achieves that GV bound, at least two types of partial results may be considered as subgoals. In the first class of results, one seeks explicit codes whose rate vs.\ distance tradeoff is as close to the GV bound as possible. This includes the works discussed in the first two paragraphs of \cref{sec:related} above. 
A second path is to seek codes that fully attain the GV bound, and strive to minimize the amount of randomness used in their construction.

Varshamov's classic result \cite{Varshamov} is that a random linear code likely achieves the GV bound. Constructing such a code of length $n$ and rate $R$ requires sampling either a random generating matrix or a random parity-check matrix, and thus $O\left(\min\{R,1-R\}\cdot n^2\right)$ random bits are needed. Two classical elementary constructions---the Wozencraft ensemble \cite{Massey1963} and the random Toeplitz Matrix construction (e.g., \cite[Exercise 4.6]{GRS}) 
---are able to reduce the needed randomness to $O(n)$. 

So far, no codes achieving the GV bound using $o(n)$ randomness are known. Moreover, there is a certain natural obstacle, which we now describe, that needs to be tackled before sublinear randomness can be achieved. Say that a random code $\cC\subseteq \F_2^n$ is \emph{uniform} if every $x\in \F_2^n\setminus\{0\}$ appears in the code with the same probability, namely, $p_{R,n} = \frac{2^{Rn}-1}{2^n-1}$. It is not hard to prove via a union bound that a uniform linear code achieves the GV bound with high probability (this is exactly Varshamov's observation). To the best of our knowledge, every known GV-bound construction to date, including the linear randomness constructions mentioned above, is uniform. Unfortunately, a uniform code ensemble with sublinear randomness cannot exist as long as $R$ is bounded away from $1$. Indeed, to have events that occur with probability $p_{R,n}$, at least $\log_2 \frac1{p_{R,n}} \approx (1-R)n$ random bits are required. Therefore, a code construction obtaining the GV bound with sublinear randomness would have to do so without being uniform (see also \cite[Section 5]{MRSY24}). 
We have hope that our sufficient conditions in \cref{thm:suff,main-entropy} could be attained by non-uniform codes.  For example, as discussed above, the soft-decoding-like condition of \cref{thm:suff} is reminiscent of results on soft-list-decoding and soft-list-recovery, which in different parameter regimes can even be achieved by {deterministic} codes.  

A related line of work \cite{GM22,PP23,MRSY24} attempts to construct codes that enjoy a broad class of desirable combinatorial properties similar to those of random linear codes using as little randomness as possible. Such properties include not just the GV bound, but also list decodability up to the \emph{Elias bound} (see \cite{MRSY24}), list recoverability, and, more generally, \emph{local similarity} (see \cite[Definition 2.14]{MRSY24}) to a random linear code. 

\subsection{Technical Overview}\label{sec:tech}
In this section we give an overview of the main technical ideas.  This section also serves as an outline of the paper.

\paragraph{\cref{sec:moment}: A moment-based framework.} 
In \cref{sec:moment}, we set up a framework that will be useful for the results in \cref{sec:random} and \cref{sec:soft}.  We describe this approach here.

Suppose that we are trying to encode a message $m \in \F_q^k$ with our concatenated code $\cC = \cout \circ \cin$, to obtain $\cC(m) = w \in \F_2^{n \cdot n_0}$.  
Each symbol of $w$ is indexed by some $\alpha \in [n]$ and some $\beta \in [n_0]$; this symbol is equal to
\[ (\cin(\cout(m)_\alpha))_\beta = \langle \cout(m)_\alpha, b_\beta \rangle,\]
where $b_\beta$ is the $\beta$'th row for a generator matrix
 $G_0 \in \F_2^{n_0 \times k_0}$ for $\cin$, and 
where the $\langle \cdot, \cdot \rangle$ notation denotes the dot product over $\F_2$.  This motivates the definition of a variable $X_m \in \R$ defined by
\[ X_m = \sum_{\alpha \in [n]} \sum_{\beta \in [n_0]} (-1)^{\langle \cout(m)_\alpha, b_\beta \rangle}.\]
Indeed, $X_m$ is the bias of $w = \cC(m)$; the weight of $w$ is at least $\frac{1}{2} - O(\eps N)$ if and only if $X_m$ is at most $O(\eps N)$.
Thus, to show that the code $\cC$ has distance at least $\frac{1}{2} - O(\eps N)$, it suffices to show that 
\[ \max_{m \in \F_q^k \setminus \{0\}} X_m = O(\eps N).\]

Our strategy will be to consider a large moment of $X_m$ over the choice of a random nonzero message $m$:
\[ \mathbb{E}_{m \sim \F_q^k \setminus \{0\}} [X_m^r] \]
for some appropriate $r$.
If we can show that this is smaller than $(c \eps N)^r / q^k$, then Markov's inequality will imply that 
\[ \Pr_{m \sim \F_q^k \setminus \{0\}}[ X_m \geq c \eps N ] \leq \frac{ \mathbb{E}_{m \sim \F_q^k \setminus \{0\}} [X_m^r]}{(c\eps N)^r} < \frac{1}{q^k}, \]
and in particular that there are no messages $m$ so that $X_m \geq c \eps N$.

In \cref{lem:moment}, we take a Fourier transform in order to re-write $\E[X_m^r]$ as a quantity involving $\cout^\perp$.  This quantity can be thought of as follows.  For every integer-valued matrix\footnote{In the actual quantity, the entries of this matrix are ordered, and we denote it $\cV$ instead of $V$; we ignore the ordering in this discussion for simplicity.} $V \in \mathbb{Z}_{\geq 0}^{n_0 \times n}$ with entries that sum to $r$, we consider a vector $g_V \in \F_q^n$ defined by considering the matrix $G_0^T \cdot V \in \F_2^{k_0 \times n}$ and then treating it as a vector $g_V \in \F_q^n$ by identifying each of the columns in $\F_2^{k_0}$ with elements of $\F_q$.  Then the quantity in \cref{lem:moment} has to do with the number of these vectors $g_V$ that are in $\cout^\perp$.  The exact expression doesn't matter too much for this informal discussion; instead we explain below how we use this re-writing to prove \cref{thm:main-random} and \cref{thm:suff}.

\paragraph{\cref{sec:random}: Most codes $\cout$ are good.}
\cref{thm:main-random} informally says that if $\cout$ is a random linear code, then with high probability $\cout \circ \cin$ is near the GV bound.  
In the proof, we use our framework from \cref{sec:moment}, and show that with high probability over $\cout$, the moment $\E_m[X_m^r]$ is small for an appropriate $r$.  To do this, we need to count the number of matrices $V$ described above that are likely to land in $\cout^\perp$.  Since $\cout$ is a random linear code, so is $\cout^\perp$, and so the probability of any particular \emph{non-zero} $g_V$ landing in it is small (about $1/q^k$), while of course the probability that $0$ is contained in $\cout^\perp$ is $1$.  Thus, the challenge is understanding how many $g_V$-s are actually zero.  There are two ways that a matrix $V$ as described above could lead to $g_V = 0$: Either $V = 0 \text{ mod }2$, or else $V$ is non-zero mod $2$ but $G_0^T V = 0$.  The first case can be counted straightforwardly.  For the second, we leverage the weight distribution that the inner code $\cin$ is likely to have.  We note that this is the only place (in any of our arguments) that we need $\cin$ to be a random linear code: We just need it to have approximately the ``right'' weight distribution.

\paragraph{\cref{sec:soft}: A soft-decoding-like sufficient condition.}
The expression that we get for $\E_m[X_r^m]$ in \cref{lem:moment} directly inspires our soft-decoding-like sufficient condition in \cref{thm:suff}.  One can view the task of counting the matrices $V$ so that $g_V \in \cout^\perp$ as choosing a random $V$ and asking about the probability that $g_V \in \cout^\perp$.  If the columns of $V$ were independent, then this would be the same as choosing the coordinates of $g_V$ i.i.d.\ from some distribution $\cD$.  Thus we would get a requirement on
$\Pr_{x \sim \cD^n}[x \in \cout^\perp],$
similar to the condition in \cref{eq:cond_intro} that we end up with.

Of course, the coordinates are not independent (because the total weight of $V$ is fixed to be $r$), but this can be solved. 
 In more detail, we choose $r$ to be a Poisson random variable, which in this setting makes the columns of $V$ independent.  One hiccup is that the ``Poisson-ized'' distribution turns out to be meaningfully different than the original distribution, in the sense that it is much more likely that $g_V = 0$ in the Poisson-ized version.  This means that the ``natural'' soft-decoding-like condition that one would get out of this is not realizable: The probability that $g_V \in \cout^\perp$ is much bigger than we want it to be, for \emph{any} $\cout$, just because $g_V$ is too likely to be zero.  Fortunately, this seems to be the only obstacle: as in \cref{eq:cond_intro}, we separate out the $g_V = 0$ term (using the analysis from \cref{sec:random}) to arrive at a condition that \emph{is} realizable.  We explain why the condition is realizable---that is, why there exists a $\cout$ that meets it---in \cref{rem:exists}.

\paragraph{\cref{sec:minent}: A smoothness condition on $\cout$.}
For our second sufficient condition, we depart from our moment-based framework and work from first principles.  Our main theorem in \cref{sec:minent} is \cref{main-entropy}, which informally says that if the elements of $\cout$ have ``smooth'' enough distributions of symbols, in the sense that they each have large enough min-entropy, that $\cC = \cout \circ \cin$ will lie near the GV bound with high probability.
The basic idea is to consider a \emph{worst-case} assignment of symbols in $\F_q$ to codewords in $\cin$; this assignment need not be linear and can depend on a particular codeword $c \in \cout$.  Such a worst-case assignment would simply assign the lowest-weight codewords in $\cin$ to the most frequent symbols in a codeword $c \in \cout$.  Using the weight distribution that $\cin$ is likely to have, along with the min-entropy assumption, we can show that this worst-case assignment will \emph{still} result in codewords $w \in \cC$ of weight at least $\frac{1}{2} - O(\eps)$.

We note that, unlike our sufficient condition from \cref{sec:soft}, we don't have a proof of feasibility for our smoothness condition.  That is, as far as we know, there may not be any linear code $\cout$ that is smooth in this sense.  However, as a proof of concept we mention in \cref{rem:exists2} that a random linear code will have a similar property with high probability.   
Moreover, we find it plausible that codewords of \emph{algebraically structured} codes (say, Folded Reed--Solomon codes, Folded Multiplicity, or even large sub-codes of plain Reed--Solomon codes), would satisfy this property, even if a random code does not.

\section{Preliminaries}\label{sec:prelim}

\paragraph{Notation.}  For a vector $x \in \F^n$, and $\alpha \in [n]$, we use $x_\alpha$ to denote that $\alpha$'th entry of $x$.  For $x,y \in \F_2^N$, we define $\langle x,y\rangle = \sum_{\alpha =1}^N x_\alpha y_\alpha \in \F_2$.  For fields $\F_q$ where $q > 2$ is a power of $2$, we use $\langle x,y \rangle = \Tr(\sum_\alpha x_\alpha y_\alpha)$, where $\Tr \colon \F_q \to \F_2$ is the \emph{field trace} defined below.  (In particular it is \emph{not} the standard dot product over $\F_q$!)  The reason for this is explained later, but informally it is because we will think of elements of $\F_q$ as vectors in $\F_2^{k_0}$ when $q = 2^{k_0}$, and our notation matches this.

\paragraph{Codes, Linear Codes, and Random Linear Codes.}
For a finite field $\F$, a code $\cC \subseteq \F^n$ is called \emph{linear} if $\cC$ is a linear subspace of $\F^n$.  The dimension of $\cC$ as a subspace is called the \emph{dimension} of the code.  For a linear code $\cC \subseteq \F^n$, we define the dual code $\cC^\perp \subseteq \F^n$ by
\[ \cC^\perp = \left\{ g \in \F^n \,:\, \sum_{\alpha \in [n]} g_\alpha \cdot c_\alpha = 0 ~ \forall c \in \cC\right\}.\]

We say that a code $\cC\subseteq \F^n$ is a \emph{random linear code} of dimension $k$ if $\cC$ is chosen uniformly among all subspaces of $\F_2^n$ of dimension $k$.

\paragraph{Gilbert--Varshamov Bound.}
In this paper, we study codes that approach the \emph{Gilbert--Varshamov Bound}, which states that there exist codes of distance $\delta$ and rate $R$ approaching $1 - h_2(\delta)$.  
Here, $h_2$ is the binary entropy function, given by
\[ h_2(x) = x \log\left(\frac{1}{x}\right) + (1-x)\log_2\left( \frac{1}{1-x} \right).\]

\begin{theorem}[GV Bound, \cite{Gilbert,Varshamov}]\label{thm:GV}
Let $\delta \in [0,1/2)$ and let $\eta \in (0, 1 - h_2(\delta)]$.
Then for any $n > 1/\eta$, there exists a (linear) code code $\cC \subseteq \F_2^n$ with rate 
\[ R \leq 1 - h_2(\delta) -\eta \]
and relative distance at least $\delta$.
\end{theorem}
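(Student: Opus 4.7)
The plan is to use Varshamov's classical probabilistic argument based on a uniformly random parity-check matrix. Set $k = \lfloor (1-h_2(\delta)-\eta) n \rfloor$, so that any linear code of dimension $k$ in $\F_2^n$ will have rate at most $1 - h_2(\delta) - \eta$. Sample a random matrix $H \in \F_2^{(n-k)\times n}$ uniformly and consider the linear code $\cC = \ker(H) \subseteq \F_2^n$; if $H$ happens not to be full rank, $\cC$ simply has dimension larger than $k$ and we can pass to any dimension-$k$ subcode without decreasing the distance.

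Next I would observe that $\cC$ has relative distance at least $\delta$ if and only if no nonzero vector $v \in \F_2^n$ of Hamming weight strictly less than $\delta n$ lies in $\ker(H)$. For any fixed nonzero $v \in \F_2^n$, a straightforward calculation using the uniform distribution of $H$'s rows gives $\Pr[Hv = 0] = 2^{-(n-k)}$. The number of nonzero vectors in $\F_2^n$ of weight less than $\delta n$ is bounded by the volume of the Hamming ball,
\[
\sum_{i=1}^{\lceil \delta n\rceil - 1} \binom{n}{i} \;\leq\; 2^{h_2(\delta) n},
\]
where we use the standard entropy estimate, valid since $\delta < 1/2$.

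A union bound over all such vectors then yields
\[
\Pr\bigl[\,\cC \text{ has distance } < \delta n\,\bigr] \;\leq\; 2^{h_2(\delta) n}\cdot 2^{-(n-k)} \;\leq\; 2^{-\eta n},
\]
by the choice of $k$. For $n > 1/\eta$ we have $\eta n > 1$, so this probability is strictly less than $1$, and hence a parity-check matrix $H$ whose kernel is a linear code of the claimed rate and distance must exist.

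The only steps that require any real care are the standard Hamming-ball volume bound (one needs $\delta < 1/2$ to apply the entropy estimate cleanly, and to handle the boundary carefully when $\delta n$ is not an integer), and the dimension bookkeeping if $H$ is rank-deficient, which is handled by passing to a subcode. There is no serious obstacle; the argument is completely standard.
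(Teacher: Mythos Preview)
The paper does not actually supply a proof of \cref{thm:GV}; it is stated in the preliminaries as a classical result with a citation to Gilbert and Varshamov, and the paper moves on immediately. So there is no ``paper's own proof'' to compare against.

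That said, your argument is exactly Varshamov's original probabilistic proof via a random parity-check matrix, and it is correct. One small remark: the condition $n > 1/\eta$ is not really needed just to get the union-bound probability below $1$ (that holds as soon as $\eta n > 0$); rather, it guarantees that $k = \lfloor (1 - h_2(\delta) - \eta)n \rfloor$ is not forced to be zero for positive $1 - h_2(\delta) - \eta$, and more importantly matches the quantitative form stated. Your handling of the rank-deficient case and the Hamming-ball bound are fine.
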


We are interested in the parameter regime where the rate $R$ of the code is very small and the distance $\delta$ is very large.  More precisely, we will focus on the setting where $\delta = 1/2 - O(\eps)$.  It is not hard to see (for example, from the Taylor expansion of the entropy function) that in this case
\[ 1 - h_2(1/2 - \eps) = \Theta(\eps^2)\]
as $\eps \to 0$.
Thus, our goal will be the following.
\begin{goal}[What we mean by ``approaching the GV bound'' for low-rate codes]
    In this paper, we say that a family of low-rate codes $\cC_{N,\eps} \subseteq \F_2^N$ ``approaches the GV bound'' if $\cC_{N,\eps}$ has rate $\Omega(\eps^2)$ and distance $1/2 - O(\eps)$, where the asymptotic notation is as $\eps \to 0$ and as $N \to \infty$.  
\end{goal}

\paragraph{Concatenated Codes and Our Default Parameters.}
Throughout the paper, we use $\cout \subseteq \F_q^n$ and $\cin \subseteq \F_2^{n_0}$ as our outer and inner (linear) codes, both of rate $\eps$.  We will let $k = \eps n$ and $k_0 = \eps n_0 = \log(q)$ throughout.  As mentioned in the introduction, we sometimes abuse notation and write $\cout \colon \F_q^k \to \F_q^n$ and $\cin \colon \F_2^{k_0} \to \F_2^{n_0}$ to represent an arbitrary encoding map for these codes.  

Let $N = n_0 \cdot n$ and $K = k_0 \cdot k$.
Abusing notation as noted above,
the concatenated code $\cC = \cout \circ \cin \subseteq \F_2^{N}$ is given by the encoding map $\cC \colon \F_2^{K} \to \F_q^{N}$ defined as follows.  For a message $m \in \F_2^K$, we interpret $m$ as an element of $\F_q^k$ and let $c = \cout(m)$; then $\cC(m)$ is defined as
\[ \cC(m) = (\cin(c_1), \cin(c_2), \cdots, \cin(c_n)) \in \F_2^N.\]

As mentioned in the introduction, we'll take our inner code $\cin \subseteq \F_2^{n_0}$ to be a random linear code of dimension $k_0 = \eps n_0$.  The important property we will need of $\cin$ is that it have about the right weight distribution, which we formalize in the following property.

\begin{definition}[$\tau$-niceness of the inner code]\label{def:nice}  Fix parameters $0 < \tau < \eps$.  We say that the inner code $\cin \subseteq \F_2^{k_0}$ is $\tau$-\emph{nice} if for any $i \in \{1, \ldots, n_0\}$, the number of $c \in \cin^\perp$ so that
$\weight(c) = i$ is at most
\[ \binom{n_0}{i} \cdot 2^{-n_0(\eps - \tau )}. \]
(Notice that we omit the ``$\eps$'' from the name ``$\tau$-nice,'' because $\eps$ will be the same $\eps$ throughout the paper; it is the rate of both $\cin$ and $\cout$).
\end{definition}

\begin{lemma}\label{lem:RLC_nice}
Let $\cin \subseteq \F_2^{n_0}$ be a random linear code of dimension $k_0 = \eps n_0$.  Then with probability at least $1 - n_0\cdot 2^{-\tau n_0}$, $\cin$ is $\tau$-nice.
\end{lemma}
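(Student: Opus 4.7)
The plan is to use a first-moment argument (Markov's inequality) coordinate-wise in the weight $i$, followed by a union bound over $i$. Concretely, for each $i \in \{1,\dots,n_0\}$, let $N_i$ denote the number of codewords of weight exactly $i$ in $\cin^{\perp}$. I would like to show that $\Pr[N_i > \binom{n_0}{i} \cdot 2^{-n_0(\eps-\tau)}] \leq 2^{-\tau n_0}$; taking a union bound over the $n_0$ values of $i$ then gives the claimed failure probability $n_0 \cdot 2^{-\tau n_0}$.

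The main computation is the expectation $\E[N_i]$. Since $\cin$ is uniform among $k_0$-dimensional subspaces of $\F_2^{n_0}$, its dual $\cin^{\perp}$ is uniform among $(n_0-k_0)$-dimensional subspaces. By a standard orbit-counting (each nonzero vector is equally likely to lie in a uniformly random subspace of fixed dimension), for any fixed nonzero $x \in \F_2^{n_0}$,
\[
\Pr[x \in \cin^{\perp}] \;=\; \frac{2^{n_0-k_0}-1}{2^{n_0}-1} \;\leq\; 2^{-k_0} \;=\; 2^{-\eps n_0}.
\]
Summing over the $\binom{n_0}{i}$ vectors of weight $i$ and using linearity of expectation gives
\[
\E[N_i] \;\leq\; \binom{n_0}{i} \cdot 2^{-\eps n_0}.
\]

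With this in hand, Markov's inequality yields
\[
\Pr\!\left[ N_i \,>\, \binom{n_0}{i}\cdot 2^{-n_0(\eps-\tau)} \right] \;\leq\; \frac{\binom{n_0}{i}\cdot 2^{-\eps n_0}}{\binom{n_0}{i}\cdot 2^{-n_0(\eps-\tau)}} \;=\; 2^{-\tau n_0},
\]
and a union bound over $i \in \{1,\dots,n_0\}$ shows that $\cin$ fails to be $\tau$-nice with probability at most $n_0 \cdot 2^{-\tau n_0}$, as desired.

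There is no real obstacle here: the only step that requires any thought is verifying the per-vector inclusion probability $\Pr[x \in \cin^{\perp}] \leq 2^{-\eps n_0}$, which is immediate once one notes that the dual of a uniformly random $k_0$-dimensional subspace is itself a uniformly random $(n_0-k_0)$-dimensional subspace, and every nonzero $x$ plays a symmetric role. Everything else is bookkeeping with Markov and a union bound.
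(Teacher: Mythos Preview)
Your proof is correct and follows essentially the same approach as the paper: bound $\E[N_i]$ by $\binom{n_0}{i}2^{-\eps n_0}$, apply Markov's inequality for each $i$, and union bound over $i$. You actually give a bit more detail than the paper does in justifying the per-vector inclusion probability via the dual being a uniformly random $(n_0-k_0)$-dimensional subspace.
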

\begin{proof}
    Let $E_i$ denote the event that $\cin^\perp$ has at most ${\binom{n_0}{i}}\cdot 2^{-n_0(\eps - \tau)}$ codewords of weight $i$.  Note that the expected number of such codewords is at most ${\binom{n_0}{i}}2^{-n_0 \eps}$, so by Markov's inequality,
    \[ \Pr_{\cin}[E_i] \geq 1 - 2^{-\tau n_0}.\]
    By a union bound,
    \[ \Pr[\cin \text{ is $\tau$-nice}] \geq 1- n_02^{-\tau n_0},\]
    as claimed.
\end{proof}
We also need the following observation about random binary linear codes.
\begin{lemma}[Negative correlation of $x\in\cC$ and $y\in\cC$]\label{cor:variance}
Let $\cC\subseteq \F_2^n$ be a random linear code and let $x,y\in \F_2^n\setminus \{0\}$ such that $x\ne y$. Then, 
$$\Pr_{\cC}[x,y\in \cC] \le \Pr_{\cC}[x\in \cC]\cdot \Pr_{\cC}[y\in \cC].$$
\end{lemma}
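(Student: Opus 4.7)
The plan is to compute both sides of the claimed inequality in closed form using the symmetry of the random linear code model, and then verify the resulting numerical inequality. Let $k$ denote the dimension of $\cC$.

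First I would invoke the fact that the distribution of $\cC$ (uniform among $k$-dimensional subspaces of $\F_2^n$) is invariant under the action of $GL_n(\F_2)$. Since $GL_n(\F_2)$ acts transitively on nonzero vectors of $\F_2^n$, the quantity $\Pr_{\cC}[z \in \cC]$ takes the same value for every $z \in \F_2^n \setminus \{0\}$. Summing this common value over all $2^n-1$ nonzero vectors and using the deterministic identity $|\cC \setminus \{0\}| = 2^k - 1$ yields
$$\Pr_{\cC}[x \in \cC] = \Pr_{\cC}[y \in \cC] = \frac{2^k - 1}{2^n - 1}.$$

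Next I would observe that since the only nonzero scalar in $\F_2$ is $1$, any two distinct nonzero vectors of $\F_2^n$ are automatically linearly independent; in particular $x$ and $y$ are. The group $GL_n(\F_2)$ also acts transitively on ordered pairs of linearly independent vectors, so by the same invariance $\Pr_{\cC}[x, y \in \cC]$ depends only on the linear independence of the pair. Counting ordered linearly independent pairs inside $\cC$ (there are $(2^k - 1)(2^k - 2)$ of them) against the total number of such pairs in $\F_2^n$ (which is $(2^n - 1)(2^n - 2)$) gives
$$\Pr_{\cC}[x, y \in \cC] = \frac{(2^k - 1)(2^k - 2)}{(2^n - 1)(2^n - 2)}.$$

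Plugging both expressions into the claim reduces it to the inequality $\frac{(2^k - 1)(2^k - 2)}{(2^n - 1)(2^n - 2)} \le \left(\frac{2^k - 1}{2^n - 1}\right)^2$. After canceling one factor of $(2^k-1)/(2^n-1)$, this becomes $\frac{2^k - 2}{2^n - 2} \le \frac{2^k - 1}{2^n - 1}$, and cross-multiplying simplifies the task further to $2^k \le 2^n$, which holds because $k \le n$. There is no real obstacle here; the one thing to be careful about is the special feature of $\F_2$ that forces $x \ne y$ to imply linear independence, which is what lets the pair-probability be computed by exactly the same clean symmetry argument as the single-vector probability.
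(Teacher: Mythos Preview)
Your proof is correct and essentially follows the same route as the paper: both compute the two probabilities explicitly as $\frac{2^k-1}{2^n-1}$ and $\frac{(2^k-1)(2^k-2)}{(2^n-1)(2^n-2)}$ (the paper phrases this via the probability that a random $k$-dimensional subspace contains a given $d$-dimensional subspace, while you use $GL_n(\F_2)$-transitivity, but these are the same computation), and then reduce the claimed inequality to $2^k \le 2^n$.
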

\begin{proof}
    A random linear code of dimension $Rn$ contains a given $d$-dimensional linear subspace of $\F_2^n$ with probability $\prod_{i=0}^{d-1} \frac{2^{Rn}-2^i}{2^n-2^i}$. Since $x$ and $y$ span a $2$-dimensional space, we have    
    $$\Pr[x,y\in \cC] = \frac{\left(2^{Rn}-1\right)\cdot\left(2^{Rn}-2\right)}{\left(2^{n}-1\right)\cdot\left(2^{n}-2\right)} \le \left(\frac{2^{Rn}-1}{2^n-1}\right)^2 = \Pr[x\in \cC]\cdot \Pr[y\in \cC].$$
\end{proof}

\paragraph{$\F_q$ as a vector space over $\F_2$, and Fourier Analysis.}
Let $q = 2^{k_0}$ be a power of $2$; we will set $q$ like this for the rest of the paper.  
It is not hard to see that $\F_q$ is a vector space over $\F_2$ of dimension $k_0$.  In particular, we can identify elements of $\F_2^{k_0}$ with $\F_q$ by simply writing elements of $\F_q$ out in any basis of $\F_q$ over $\F_2$.  For convenience, we will choose a particular basis, which interacts nicely with the \emph{trace map}, defined by
\[ \Tr(x) = x + x^2 + \cdots + x^{2^{k_0 -1}}.\]
It is well-known that $\Tr$ is both $\F_2$-linear and also that its image is indeed $\F_2$.  In that sense, $\Tr(\cdot, \cdot):\F_q \times \F_q \to \F_2$ behaves a bit like a dot product, and in fact this can be made formal.

In more detail, for any $k_0 > 0$ there always exists a \emph{self-dual} basis $\nu_1, \ldots, \nu_{k_0}$ of $\F_q = \F_{2^{k_0}}$ over $\F_2$ (e.g., \cite{SA80,JMV90}); that is, this basis has the property that $\Tr(\nu_i \nu_j) = \ind[i=j]$.  In particular, this means that if we choose such a basis in order to identify $\F_{q}$ with $\F_2^{k_0}$, we have
$$\Tr(\alpha \cdot \beta) = \langle \alpha, \beta \rangle$$ for any $\alpha, \beta \in \F_q \sim \F_2^{k_0}$, where on the left hand side we treat $\alpha$ and $\beta$ as elements of $\F_q$, and on the right hand side we treat them as elements of $\F_2^{k_0}$.\footnote{Indeed, if we write $\alpha = \sum_{i=1}^{k_0} \alpha_i \nu_i$ and $\beta = \sum_{j=1}^{k_0} \beta_j \nu_j$ as elements of $\F_q$, and hence as $(\alpha_1, \ldots, \alpha_{k_0})$ and $(\beta_1, \ldots, \beta_{k_0})$ as elements of $\F_2^{k_0}$, then by $\F_2$-linearity, $\Tr(\alpha \cdot \beta) = \sum_{i,j} \alpha_i \beta_j \Tr(\nu_i \cdot \nu_j) = \sum_i \alpha_i \beta_i = \langle \alpha , \beta \rangle$.}

The reason we want to do our identification between $\F_q$ and $\F_2^{k_0}$ is because it will make the notation a bit easier for \emph{Fourier transforms.}
For a function $\varphi \colon \F^n_q \to \mathbb{R}$, we define the {Fourier transform} of $\varphi$, denoted $\hat{\varphi} \colon \F^n_q \to \mathbb{R}$, by
\[ \hat{\varphi}(\omega) = \frac{1}{q^n} \sum_{x \in \F^n_q} \varphi(x) (-1)^{\Tr\left( \langle \omega , x \rangle \right)}. \]
Notice that if we were to replace the elements of $\F_q$ with their corresponding elements of $\F_2^{k_0}$, and treat $\omega \in \F_q^n$ as $\omega \in \F_2^{k_0 n}$ in the natural way, by the above correspondence we can write this as
\[ \hat{\varphi}(\omega) = \frac{1}{2^{k_0n}} \sum_{x \in \F_2^{k_0n}} \varphi(x) (-1)^{\langle \omega, x \rangle}.\]
Thus, we will drop the trace notation for the rest of the paper, and just use the above definition.  Not only does this simplify the notation for Fourier transforms, but it allows us to move back and forth between elements of $\F_q$ and elements of $\F_2^{k_0}$, which we will want to do anyway as we have to identify elements of $\F_q$ as messages for $\cin$ in $\F_2^{k_0}$.

Next, we record a few useful facts about the Fourier transform.  The first is that the Fourier transform can be inverted: For any function $\varphi \colon \F^n_q \to \mathbb{R}$ and any $x$,
\begin{equation}\label{eq:inverse_fourier_trans}
\varphi(x) = \sum_{\omega \in \F^n_q} \hat{\varphi}(\omega) (-1)^{\langle x, \omega \rangle}.
\end{equation}
The second is that for any $\F_q$-subspace $V \subseteq \F_q^n$, and for any $g \in \F_q^n$, we have
\begin{equation}\label{eq:dualfourier}
\widehat{\ind_V}(g) = \frac{|V|}{q^n} \ind_{V^\perp}(g). 
\end{equation}

\section{A Useful Moment Computation}\label{sec:moment}

In this section, we define a random variable $X_m$ that quantifies how ``bad'' a message $m$ is for our concatenated code, and we prove (\cref{lem:moment}) that the moments of $X_m$ are well-behaved.

We use the notation set up in \cref{sec:prelim}.  Namely, we fix $\eps > 0$, an integer $n$, and a power of two $q = 2^{k_0}$.  We consider the concatenated $\cout \circ \cin$ for linear codes $\cout \subseteq \F_q^n$ of dimension $k = \eps n$ and $\cin \subseteq \F_2^{n_0}$ of dimension $k_0 = \eps n_0$.  We let $N = n \cdot n_0$.

In this section, we think of both $\cout$ and $\cin$ as \emph{fixed}.  The only randomness will be in choosing a random message $m \in \F_q^k \setminus \{0\}$.
Before we begin, we introduce one more definition. 
\begin{definition}\label{def:Omega}
For a code $\cin \subseteq \F_2^{n_0}$ of dimension $k_0$, let $G_0 \in \F_2^{n_0 \times k_0}$ be an (arbitrary) generator matrix for $\cin$.  That is, 
\[ \cin = \{ G_0 \cdot x \,:\, x \in \F_2^{k_0} \}.\]
Let $\Omega \subseteq \F_2^{k_0}$ denote the set of rows of $G_0$, so $|\Omega| = n_0$.  By identifying $\F_2^{k_0}$ with $\F_q$ (as described in \cref{sec:prelim}, using a self-dual basis), we may also treat $\Omega$ as a subset of $\F_q$. 
We say that $\Omega$ is the set \emph{derived from $\cin$.}
\end{definition}
\begin{remark}
    Since $G_0$ can be an arbitrary generator matrix, there is some freedom in defining the set $\Omega$ derived from $\cin$.  It won't matter for the results in this paper, but it may matter for instantiating our sufficient condition in \cref{sec:soft}.
\end{remark}


Next, we define the random variable whose moments we want to bound.
\begin{definition}\label{def:X}
For a message of the outer code $m \in \F_q^k$, define
\[
X_{m} \triangleq \sum_{\alpha \in [n]} \sum_{b \in \Omega}(-1)^{\langle \cout(m)_{\alpha},b \rangle}.
\]
\end{definition}
Observe that $X_m$ is the bias (namely, the difference between the zeros and ones) of the codeword $\code(m) = \cout(m) \circ \cin$. 
In particular, if $X_m = O(\eps N)$ for all nonzero $m$, this will imply that the distance of $\code$ is least $1/2 - O(\eps)$.
Thus, our goal will be to show that $X_m$ is small.
Towards that end,
we will compute the $\moment$-th moment of $X_m$ (over the randomness of a random nonzero message $m$), for a suitable large $\moment$.
Finally, given an ordered list $\mathcal{V}$ of pairs in $[n] \times \Omega$, 
 for every $\alpha \in [n]$ we
denote by $\cV_{\alpha}$ the \emph{multiset}
$\cV_{\alpha} = \set{ b : (\alpha,b) \in \cV }$ (so $b$ appears twice in $\cV_{\alpha}$ if $(\alpha,b)$ appears twice in $\cV$).

The following lemma characterizes the $\moment$-th
moment of $X_m$ in terms of the dual code $\cout^{\perp}$. 
\begin{lemma}\label{lem:moment}
Suppose that $\Omega$ is derived from $\cin$.  Using the notation above, for every $\moment \ge 1$, it holds that
\[
\E_{m \sim \F_q^k \setminus \set{0}}\left[ X_m^{\moment} \right] = \frac{1}{q^{k}-1}\sum_{\cV \in ([n]\times\Omega)^{\moment}}\left( q^k \cdot \ind[g_{\cV}\in \cout^{\perp}] - 1 \right),
\]
where $\cV$ ranges over all tuples $((\alpha_1,b_1),\ldots,(\alpha_{\moment},b_{\moment}))$, and $g_{\cV} \in \F_q^n$ is such that $(g_{\cV})_\alpha = \sum_{b \in \cV_{\alpha}}b$.
\end{lemma}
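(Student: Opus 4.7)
The plan is to expand $X_m^{\moment}$ term-by-term, swap expectation with the outer sum over tuples $\cV$, and reduce the problem to evaluating a single character sum of the form $\sum_{c\in\cout}(-1)^{\langle c,g_\cV\rangle}$. Expanding the product gives
\[
X_m^{\moment} \;=\; \sum_{\cV \in ([n]\times\Omega)^{\moment}} \prod_{j=1}^{\moment} (-1)^{\langle \cout(m)_{\alpha_j}, b_j\rangle} \;=\; \sum_{\cV} (-1)^{\sum_{j}\langle \cout(m)_{\alpha_j}, b_j\rangle},
\]
and grouping the inner sum by coordinate $\alpha\in[n]$ (with multiplicity, via the multiset $\cV_\alpha$) folds the exponent into $\langle \cout(m), g_\cV\rangle$, where $(g_\cV)_\alpha = \sum_{b\in\cV_\alpha} b$ is computed in $\F_2^{k_0}\simeq \F_q$. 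So $X_m^{\moment} = \sum_{\cV} (-1)^{\langle \cout(m), g_\cV\rangle}$.

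Next, I interchange the expectation with the sum over $\cV$, leaving me to evaluate $\E_{m\sim \F_q^k\setminus\{0\}}[(-1)^{\langle \cout(m), g_\cV\rangle}]$ for each fixed $\cV$. It is cleaner to first compute the corresponding average over \emph{all} $m \in \F_q^k$ (including $m=0$) and then subtract the $m=0$ contribution, which is always $+1$. Since $\cout$ is a $k$-dimensional $\F_q$-subspace of $\F_q^n$, uniform $m$ gives uniform $c=\cout(m)\in\cout$, so it suffices to understand $\sum_{c\in\cout}(-1)^{\langle c, g_\cV\rangle}$.

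The main delicate point — and the one step worth spelling out carefully — is matching the $\F_2$-inner product $\langle c, g_\cV\rangle$ (with $c, g_\cV$ viewed in $\F_2^{nk_0}$) against the $\F_q$-dual $\cout^\perp$ in the claim. The self-dual basis fixed in \cref{sec:prelim} makes this essentially automatic: $\langle c, g_\cV\rangle = \Tr\bigl(\sum_{\alpha} c_\alpha \cdot (g_\cV)_\alpha\bigr)$. The map $s \colon \cout \to \F_q$ defined by $s(c) = \sum_{\alpha} c_\alpha (g_\cV)_\alpha$ is $\F_q$-linear. If $g_\cV \in \cout^\perp$ then $s\equiv 0$, every term equals $+1$, and the character sum is $q^k$. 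Otherwise $s$ is a nonzero $\F_q$-linear functional on $\cout$, hence surjective with equal-size fibers onto $\F_q$; since $\Tr$ is a nontrivial $\F_2$-linear functional on $\F_q$, the values $(-1)^{\Tr(s(c))}$ average to zero as $c$ ranges over $\cout$, so the character sum vanishes. In short,
\[
\sum_{c\in\cout}(-1)^{\langle c,g_\cV\rangle} \;=\; q^{k}\cdot \ind[g_\cV \in \cout^\perp].
\]

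Finally, subtracting the $m=0$ term and normalizing by $q^k-1$ converts each summand into $\frac{q^k\cdot \ind[g_\cV\in\cout^\perp]-1}{q^k-1}$, and summing over $\cV$ yields the claimed identity. The only step I expect to need care is the character-sum dichotomy just above, because it is where the choice of a self-dual basis (rather than an arbitrary $\F_2$-basis of $\F_q$) makes the $\F_2$-exponent line up precisely with the $\F_q$-dual code; every other step is a routine rearrangement.
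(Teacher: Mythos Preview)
Your proof is correct and follows the same overall arc as the paper's: expand $X_m^r$ as a sum over $\cV$, collapse the exponent coordinate-by-coordinate into $\langle \cout(m), g_\cV\rangle$, swap sums, and reduce to the dichotomy for the character sum over $\cout$. The paper, however, takes a somewhat more elaborate route to the last step: it introduces auxiliary functions $\varphi_\alpha(x) = (-1)^{\langle x, (g_\cV)_\alpha\rangle}$, computes their Fourier transforms explicitly (each is a delta at $(g_\cV)_\alpha$), applies the inverse Fourier transform, and finally invokes the identity $\widehat{\ind_{\cout}}(g) = \frac{|\cout|}{q^n}\ind_{\cout^\perp}(g)$ to obtain $\sum_{m\neq 0}(-1)^{\langle g,\overline m\rangle} = q^k\ind[g\in\cout^\perp]-1$. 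Your version bypasses this Fourier-analytic detour entirely: you recognize the exponent as $\langle \cout(m), g_\cV\rangle$ in one line, and then evaluate the character sum directly by noting that $c\mapsto \sum_\alpha c_\alpha(g_\cV)_\alpha$ is an $\F_q$-linear functional on $\cout$, surjective exactly when $g_\cV\notin\cout^\perp$, so composing with the nontrivial additive character $(-1)^{\Tr(\cdot)}$ gives either $q^k$ or $0$. This is more elementary and makes the role of the self-dual basis (and hence $\F_q$-linearity of $\cout$) more transparent; the paper's Fourier formalism is essentially packaging the same computation in transform language.
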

\begin{remark}[$\F_q$-linear codes over $\F_q^s$.]\label{rem:folding}
    Several constructions of potential outer codes (for example, Folded RS codes~\cite{GR08} or univariate multiplicity codes~\cite{Kop15,GW13}) are not linear over their alphabets but instead are linear over a subfield.  That is, the alphabets for these codes are $\F_q^s$ for some $s > 1$, and the codes are $\F_q$-linear.  An inspection of the proof shows that \cref{lem:moment} still holds in this case: The only thing that changes is that we need to define the dual code $\cout^\perp$ by first ``unfolding'' the original code to treat it as a subspace of $(\F_q)^{sn}$, taking the dual, and ``re-folding.''  
    Intuitively, the proof goes through because the definition of the Fourier transform is the same whether the relevant vectors lie in $(\F_q^s)^n$ or $(\F_q)^{sn}$. 
    Similarly, \cref{lem:moment} holds when $\cout$ is any $\F_2$-linear code (rather than $\F_q$-linear).

    More generally, it is not hard to see that all of the results in the paper go through for $\F_q$-linear codes over $\F_q^s$ (or even any $\F_2$-linear codes) $\cout$.  Indeed, the results in \cref{sec:random} and \cref{sec:soft} essentially rely only on \cref{lem:moment}; while the results in \cref{sec:minent} are separate but are already stated for $\F_2$-linear codes and can easily be seen to extend to larger alphabets.
\end{remark}
\begin{proof}[ of \cref{lem:moment}]
First, we can write, for any $m \in \F_q^k$,
\[
X_m^{\moment} = \sum_{\cV = \langle (\alpha_1,b_1),\ldots,(\alpha_\moment,b_{\moment}) \rangle}~\prod_{j=1}^{\moment}(-1)^{\langle \cout(m)_{\alpha_j},b_j \rangle}
\]
Taking the expectation over a random $m \in \F_q^k \setminus \{0\}$, we have
\begin{align}
\E_{m \sim \F_q^k \setminus \set{0}}\left[ X_m^{\moment} \right] &= \frac{1}{q^k - 1}\sum_{m \in \F_q^k \setminus\set{0}}\sum_{\cV}~\prod_{j=1}^{\moment}(-1)^{\langle \cout(m)_{\alpha_j},b_j \rangle} \nonumber \\
&= \frac{1}{q^k - 1}\sum_{m \in \F_q^k \setminus\set{0}}\sum_{\cV}~\prod_{\alpha \in [n]}(-1)^{\left\langle \cout(m)_{\alpha},\sum_{b \in \cV_{\alpha}}b \right\rangle}, \label{eq:moment1}
\end{align}
where we use the convention that $\sum_{b \in \cV_{\alpha}}b$ is the zero vector whenever $\cV_{\alpha}$ is empty.

Now, for any $\alpha \in [n]$, let \[\varphi_{\alpha}(x) = (-1)^{\left\langle x,\sum_{b \in \cV_{\alpha}}b \right\rangle}.\]
Taking the Fourier transform over $\F_2^{k_0}$, we get for every $w$,
\begin{align*}
\widehat{\varphi_{\alpha}}(w) &= \frac{1}{q}\sum_{x \in \F_2^{k_0}}(-1)^{\left\langle x,\sum_{b \in \cV_{\alpha}}b \right\rangle} \cdot (-1)^{\langle w,x \rangle} \\
&= \frac{1}{q}\sum_{x \in \F_2^{k_0}}(-1)^{\left\langle x,w+\sum_{b\in \cV_{\alpha}}b \right\rangle} = \ind\left[ w = \sum_{b \in \cV_{\alpha}}b \right].
\end{align*}
Let us abbreviate $\overline{m} = \cout(m)$.
Plugging the above back to \cref{eq:moment1}, and using the inverse Fourier transform, we get
\begin{align*}
\E_{m \sim \F_q^k \setminus \set{0}}\left[ X_m^{\moment} \right] &=
\frac{1}{q^k - 1}\sum_{m \in \F_q^k \setminus\set{0}}\sum_{\cV}~\prod_{\alpha \in [n]}\varphi_{\alpha}(\overline{m}_{\alpha}) \\
&= \frac{1}{q^k - 1}\sum_{m \in \F_q^k \setminus\set{0}}\sum_{\cV}~\prod_{\alpha \in [n]}\left( \sum_{w \in \F_2^{k_0}}\widehat{\varphi_{\alpha}}(w) \cdot (-1)^{\langle w,\overline{m}_{\alpha} \rangle}  \right) \\
&= \frac{1}{q^k - 1}\sum_{m \in \F_q^k \setminus\set{0}}\sum_{\cV}\sum_{g \colon [n]\rightarrow \F_{2}^{k_0}}\left( \prod_{\alpha \in [n]} \widehat{\varphi_{\alpha}}(g_{\alpha})  \right)\left( \prod_{\alpha \in [n]} (-1)^{\langle g_{\alpha},\overline{m}_{\alpha} \rangle}  \right).
\end{align*}
Moving the sum over $m$ inside, we have
\begin{align}
\E_{m \sim \F_q^k \setminus \set{0}}\left[ X_m^{\moment} \right] &= \frac{1}{q^k - 1}\sum_{\cV}\sum_{g \in \F_q^n}\left( \prod_{\alpha \in [n]} \widehat{\varphi_{\alpha}}(g_{\alpha})  \right)\left( \sum_{m \in \F_q^k \setminus\set{0}} (-1)^{\sum_{\alpha \in [n]}\langle g_{\alpha},\overline{m}_{\alpha} \rangle}  \right) \notag \\
&=\frac{1}{q^k - 1}\sum_{\cV}\sum_{g \in \F_q^n}\left( \prod_{\alpha \in [n]} \widehat{\varphi_{\alpha}}(g_{\alpha})  \right)\left( \sum_{m \in \F_q^k \setminus\set{0}} (-1)^{\langle g,\overline{m} \rangle}  \right),
\label{eq:moment2}
\end{align}
where in the second equation we have treated $g, \overline{m}$ as elements of $\F_2^{k_0 n}$ in the natural way.
%
Next, we observe that
\begin{equation}\label{eq:momentclaim}
\sum_{m \in \F_q^k \setminus\set{0}} (-1)^{\langle g,\overline{m} \rangle} = \begin{cases}
q^k - 1 & g \in \cout^{\perp}, \\
-1 & \textnormal{otherwise.}
\end{cases}
\end{equation}
Indeed,  we have
\[
\sum_{m \in \F_q^k} (-1)^{\langle g,\overline{m} \rangle} = q^n \cdot \widehat{\ind_{\cout}}(g) = |\cout| \cdot \ind_{\cout^{\perp}}(g) = q^{k} \cdot \ind_{\cout^{\perp}}(g),
\]
and then we subtract off the zero term, 
where the penultimate inequality follows from \cref{eq:dualfourier}.

Thus, given \cref{eq:momentclaim},
 we can write \cref{eq:moment2} as
\begin{equation}\label{eq:moment3}
\E_{m \sim \F_q^k \setminus \set{0}}\left[ X_m^{\moment} \right] = \frac{1}{q^k - 1}\sum_{\cV}\sum_{g \in \F_q^n}\left( \prod_{\alpha \in [n]} \widehat{\varphi_{\alpha}}(g_{\alpha})  \right)\left( q^k \cdot \ind\left[g \in \cout^{\perp}\right] - 1 \right).
\end{equation}
Recalling our expression for $\widehat{\varphi_{\alpha}}$, observe that
\[
\prod_{\alpha \in [n]} \widehat{\varphi_{\alpha}}(g_{\alpha})  = \ind\left[ \forall \alpha \in [n], g_{\alpha} = \sum_{b \in \cV_{\alpha}}b \right].
\]
Thus, in \cref{eq:moment3}, the only nonzero term
in the sum over the $g$-s is $g_{\cV}$, which was indeed defined as $g_{\cV}(\alpha) = \sum_{b \in \cV_{\alpha}}b$. We can then conclude that
\[
\E_{m \sim \F_q^k \setminus \set{0}}\left[ X_m^{\moment} \right] = \frac{1}{q^k - 1}\sum_{\cV}\left( q^k \cdot \ind\left[g_{\cV} \in \cout^{\perp}\right] - 1 \right).
\]
\end{proof}

\section{Most Linear Codes $\cout$ Work Well}\label{sec:random}

In this section we show that there \emph{exist} low-rate concatenated linear codes approaching the GV bound, and in fact most codes $\cout$ will work when concatenated with a random linear inner code $\cin$.  In more detail,
keeping the notation as \cref{sec:prelim} and \cref{sec:moment}, we show
that $\code = \cout \circ \cin$ has distance $\frac{1}{2}-O(\eps)$, with high probability, when both
$\cout$ and $\cin$ are random linear codes of rate $\eps$. Before our proof, we will set further notation that will be useful in later sections as well. 

\begin{definition}\label{def:indicators}
For a sequence of tuples  $\cV = ((\alpha_1,b_1),\ldots,(\alpha_\moment,b_{\moment}))$, 
we define $V = V(\cV) \in \mathbb{N}^{n \times n_0}$ to be the ``unordered'' version of $\cV$, that is,
$V[\alpha,b]$ is the number of times that the pair $(\alpha,b)$ appears in $\cV$. Further, let $B=B(\cV)$ simply be $V ~ \mod~2$, where the modulo is taken element-wise. We refer to the number of $1$-s in the matrix as the \emph{weight} of $B$, denoted $\norm{B}$.
\end{definition}

\begin{theorem}\label{thm:main-random}
There exist constants $c,\bar{c},\tilde{c} > 0$ such that the following holds.
Fix any integer $k > 0$, any $\eps > 0$ sufficiently small (in terms of $\tilde{c}$), and any power-of-two $q = 2^{k_0}$.  Let $n_0 = k_0/\eps$ and $n = k/\eps$, and let $N = n_0 n.$
Suppose that $q \geq 2^{\tilde{c}/\eps^3}$.
%
Let $\cin \subseteq \F_2^{n_0}$ be a linear code of dimension $k_0$ that is $\tau$-nice, for $\tau = 1/\sqrt{n_0}$.  Let $\cout\subseteq \F_q^n$ be an independent random linear code of dimension $k$.
Then, with probability at least $1-
2^{-\eps^2 N/\bar{c}}$ over the choice of $\cout$, the relative distance of $\code = \cout\circ \cin \subseteq \F_2^N$ is at least $\frac{1}{2} - c\cdot \eps$. 
\end{theorem}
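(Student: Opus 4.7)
The plan is to apply the moment-based framework of \cref{sec:moment} to the random variable $X_m$ of \cref{def:X}, which equals the bias of the codeword $\code(m)$. Since a codeword weight of $(\tfrac12-c\eps)N$ corresponds to $X_m\leq 2c\eps N$, Markov's inequality on an even moment $r$ (we take $r=\eps^2 N$, rounded) reduces the task to showing that with high probability over $\cout$, $\E_m[X_m^r] < (c'\eps N)^r/q^k$; this yields $\Pr_m[X_m\geq c'\eps N] < 1/q^k$ and rules out any bad nonzero message, giving the theorem with $c = c'/2$.

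Applying \cref{lem:moment} and then taking expectation over $\cout$, note that $\Pr_\cout[g\in\cout^\perp]= (q^{n-k}-1)/(q^n-1)\leq q^{-k}$ for every nonzero $g\in\F_q^n$ and equals $1$ for $g=0$. The contributions of nonzero $g_\cV$-s become non-positive, and so
\[
\E_\cout\bigl[\E_m[X_m^r]\bigr] \;\leq\; \tfrac{q^k}{q^k-1}\cdot\bigl|\{\cV\in([n]\times\Omega)^r : g_\cV=0\}\bigr|.
\]
Since $\Omega$ consists of the rows of $G_0$ and $\cin^\perp=\ker G_0^T$, the condition $g_\cV=0$ is equivalent to each row of the parity matrix $B(\cV)$ lying in $\cin^\perp$. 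We bound this count in the two cases highlighted in \cref{sec:tech}.

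In \emph{Case 1} ($B(\cV)=0$, i.e., $V(\cV)$ is entrywise even), the count of $\cV$ equals $r![x^r]\cosh(x)^N=\E[(\sum_{i=1}^N\xi_i)^r]$ for i.i.d.\ Rademachers $\xi_i$, and the standard subgaussian moment bound yields $(C\eps N)^r$ for a universal constant $C$. In \emph{Case 2} ($B(\cV)\neq 0$), for a given $B$ of weight $w$ the number of $\cV$ with $B(\cV)=B$ equals $r![x^r]\cosh(x)^{N-w}\sinh(x)^w$, and using $\sinh\leq x\cosh$ together with subgaussianity this is at most $(\sqrt{CrN/2})^r(\sqrt{2r/(CN)})^w$. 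Summing over admissible $B$, the $\tau$-niceness of $\cin$ gives the generating-function identity $\sum_{w}N(w)z^w=(1+\delta((1+z)^{n_0}-1))^n-1$ with $\delta=2^{-n_0(\eps-\tau)}$; plugging $z=\eps\sqrt{2/C}$ and $(1+z)^{n_0}\leq e^{zn_0}$, the entire Case 2 contribution is at most $(\sqrt{C/2}\,\eps N)^r\cdot n\delta\, e^{\eps n_0\sqrt{2/C}}$. Choosing $C$ large enough that $\sqrt{2/C}/\ln 2 < 1$, the assumption $q\geq 2^{\tilde c/\eps^3}$ (making $n_0=\log q/\eps$ large) together with $\tau=1/\sqrt{n_0}=o(\eps)$ forces $n\delta\, e^{\eps n_0\sqrt{2/C}}\leq 2^{-\Omega(\eps n_0)}$, so Case 2 is negligible compared to Case 1.

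Combining, $|\{\cV:g_\cV=0\}|\leq (C'\eps N)^r$ for a universal $C'$. A final Markov over $\cout$ with threshold $T=(c'\eps N)^r/q^k$ yields $\Pr_\cout[\E_m[X_m^r]\geq T]\leq (C'/c')^r\cdot q^k = 2^{\eps^2 N(\log_2(C'/c')+1)}$, which is at most $2^{-\eps^2 N/\bar c}$ provided $c'\geq 2C'\cdot 2^{1/\bar c}$, completing the proof. The main technical obstacle is the Case 2 bound: the $\tau$-niceness of $\cin$ provides an exponential ``$\delta$'' penalty for each non-trivial row of $B$, which must dominate the $\sinh$-based combinatorial growth $e^{\eps n_0\sqrt{2/C}}$ at the chosen scale. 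The balance $\delta\cdot e^{\eps n_0\sqrt{2/C}}\ll 1$ is precisely why we require $q\geq 2^{\tilde c/\eps^3}$ (so that $n_0$ is large) and $\tau=1/\sqrt{n_0}$ (so that $\eps-\tau\approx \eps$ with room to spare).
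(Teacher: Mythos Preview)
Your overall strategy is sound and matches the paper's: apply \cref{lem:moment}, take the expectation over $\cout$, reduce to counting $\cV$ with $g_\cV=0$, and then exploit $\tau$-niceness to control the count.  Where you diverge is in how you bound $|\{\cV:g_\cV=0\}|$.  The paper (in \cref{claim:WIsSmall}) writes $V=B+E$ with $E$ even, separately counts admissible $E$'s and $B$'s (the latter via a row-by-row niceness bound together with an optimization over the weight of $B$ and the number of nonzero rows), and then combines.  Your approach is cleaner: identifying the per-$B$ count with the exponential generating function $r![x^r]\cosh(x)^{N-w}\sinh(x)^w$, bounding it via $\sinh\le x\cosh$ and a Rademacher moment estimate, and summing over admissible $B$ using the (coefficient-wise) generating-function upper bound coming from $\tau$-niceness.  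This buys a more conceptual argument that avoids the paper's explicit entropy optimization; the paper's route, on the other hand, yields explicit constants and the reusable formulation of \cref{claim:WIsSmall}.  Both hinge on the same numerology: the factor controlling the ``$B\neq 0$'' contribution must be absorbable into a constant-to-the-$r$ once $n_0\ge \tilde c/\eps^4$.

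There is, however, a genuine (though easily repaired) gap in your Case~2 bookkeeping.  You bound the generating-function factor by $n\delta\,e^{z n_0}$ and then assert $n\delta\,e^{z n_0}\le 2^{-\Omega(\eps n_0)}$.  This is false as stated: in \cref{thm:main-random} the outer length $n=k/\eps$ is unconstrained (only $q$, hence $n_0$, is forced to be large), so $n$ can exceed any prescribed power of $2^{\eps n_0}$.  Concretely, with $A\coloneqq \delta\bigl((1+z)^{n_0}-1\bigr)\le \delta e^{z n_0}$ you only have $(1+A)^n-1\le e^{nA}-1$, and $nA$ need not be small.  The fix is to stop trying to make $nA$ small and instead make $nA/r$ small: choose $C$ large enough that $\sqrt{2/C}/\ln 2<1$, so $A\le 2^{-c'\eps n_0}$ for some $c'>0$; then
\[
\frac{nA}{r}\;=\;\frac{A}{\eps^2 n_0}\;\le\;\frac{2^{-c'\eps n_0}}{\eps^2 n_0}\;\le\;\frac{\eps^2}{\tilde c}
\]
once $n_0\ge \tilde c/\eps^4$, whence $(1+A)^n\le e^{nA}\le e^{r\eps^2/\tilde c}\le 2^r$.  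This is exactly the ``constant-to-the-$r$'' you need, and the rest of your argument (final Markov over $\cout$, choice of $c'$) goes through unchanged.  This is the same phenomenon that in the paper's proof appears as the factor $2^{2N/\sqrt{n_0}}$ (arising from $2^n$ choices for the support of $B$), which is likewise absorbed via $n/r=1/(\eps^2 n_0)\to 0$.
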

\begin{remark}[$\tau$-niceness of inner code]
Note that, by \cref{lem:RLC_nice}, a random linear code $\cin$ is $\tau$-nice for $\tau = 1/\sqrt{n_0}$ with probability at least $1 - 2^{-\Omega(\sqrt{n_0})}$.  Thus, \cref{thm:main-random} implies that $\cout \circ \cin$ approaches the GV bound with high probability over a random $\cin$ and a random $\cout$.
\end{remark}

\begin{proof}
Pick $\moment= \eps^2 N$. 
More precisely, we will choose $\moment$ to be the smallest even integer that is larger than $\eps^2 N$; we assume without loss of generality that $\moment = \eps^2 N$, which will only affect the constants in the theorem statement, and will make the computations much more readable.
\cref{lem:moment} implies that for any fixed $\cin$ (and thus any fixed $\Omega$, the set derived from $\cin$, as in \cref{def:Omega}),
\begin{equation}\label{eq:random1}
\E_{m \sim \F_q^k \setminus \set{0}}\left[ X_m^{\moment} \right] = \frac{1}{q^{k}-1}\sum_{\cV \in ([n]\times\Omega)^{\moment}}\left( q^k \cdot \ind[g_{\cV}\in \cout^{\perp}] - 1 \right).
\end{equation}
In order to make the dependence on $\cout$ and $\cin$ more explicit, we will write $X_m^{\moment}(\cout,\Omega)$.

Our strategy will be to take the expectation of \cref{eq:random1} over the randomness in $\cout$, and use Markov's inequality.
To that end, note that for every sequence of tuples $\cV = \cV(\Omega)$, it holds that
\[
\Pr_{\cout}\left[ g_{\cV} \in \cout^{\perp} \right] =
\begin{cases}
1 & g_{\cV} = 0, \\
\frac{q^{n-k}-1}{q^n-1} & \textnormal{otherwise.}
\end{cases}
\]
This is since
$\cout^{\perp}$ is a random linear code of dimension $n-k$. Therefore, upon taking 
the expectation over $\cout$, for any fixed $\Omega$, \cref{eq:random1} becomes
\begin{align}
\E_{\cout,m \sim \F_q^k \setminus \set{0}}\left[ X_m^{\moment}(\cout,\Omega) \right]  &=
\frac{1}{q^k - 1}\sum_{\cV}\left( q^k \Pr_{\cout}\left[ g_{\cV} \in \cout^{\perp} \right] - 1\right) \nonumber \\
&= \sum_{\cV}\left(\ind\left[ g_{\cV}=0 \right]  - \frac{1}{q^n-1}\cdot \ind\left[g_{\cV}\ne 0\right]\right)\nonumber
\\
&\le \sum_{\cV}\ind\left[ g_{\cV}=0 \right]. \label{eq:random2}
\end{align}
Thus, we are left with bounding the number of $\cV$-s for which $g_{\cV} = 0$, recalling that $(g_{\cV})_\alpha = \sum_{b \in \cV_{\alpha}}b$.

Using the notation of \cref{def:indicators}, each $\cV$ gives rise to $V(\cV)\in \mathbb{N}^{n\times n_0}$ and $B(\cV)\in \F_2^{n\times n_0}$.  Note that $g_{\cV} = 0$ if and only if every row of $B(\cV)$ is a left kernel vector of 
$G_0 \in \F_2^{n_0 \times k_0}$, where $G_0$ is the generating matrix of $\cin$. That is, if and only if every row of $B(\cV)$ belongs to $\cin^\perp$.

We also need the following claim, whose proof we defer. 

\begin{claim}\label{claim:WIsSmall}
Fix a linear code $\cin\subseteq \F_2^{n_0}$ of dimension $k_0$ that is $\tau$-nice for $\tau = 1/\sqrt{n_0}$. Let $\eps \triangleq \frac {k_0}{n_0}$ and let $r \in \mathbb{N}$. 
Let $G_0\in \F_2^{n_0\times k_0}$ be a matrix whose left kernel is $\cin^\perp$, and denote
$$W = \left\{\cV\in ([n]\times \Omega)^r : B(\cV)\cdot G_0 = 0\right\}.$$
Then
$$
|W| \le (8N)^r\cdot (r/N)^{r/2} \cdot 2^{\frac{2N}{\sqrt{n_0}}+ \log_2 N} \cdot \max\left\{ 1, \left(\frac{r e}{\eps^2 N}\right)^{r/2}\right\}.
$$
\end{claim}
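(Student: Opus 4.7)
My plan is to count $|W|$ via an exponential generating function (EGF). The first step will be to decompose $\cV$ by the number of slots $r_\alpha$ assigned to each row. Letting $N(s) := \sum_{v\in \cin^\perp} T_{|v|}(s)$, where $T_w(s)$ is the number of length-$s$ sequences in $[n_0]$ whose parity vector equals a given vector in $\F_2^{n_0}$ of weight $w$, this will yield the identity $|W| = r!\cdot [z^r]\, E(z)^n$ with $E(z) = \sum_s N(s)\, z^s/s! = \sum_{v\in \cin^\perp}\cosh^{n_0 - |v|}(z)\sinh^{|v|}(z)$. I will then use the $\tau$-niceness assumption (with $\tau = 1/\sqrt{n_0}$) to bound the number of $v \in \cin^\perp$ of weight $w \ge 1$ by $\binom{n_0}{w}\cdot A$ with $A := 2^{-n_0\eps + \sqrt{n_0}}$. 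Combined with the identity $\sum_w \binom{n_0}{w}T_w(s) = n_0^s$ (total sequences), this will give the coefficient-wise bound $N(s) \le T_0(s) + A\cdot n_0^s$---equivalently, $E(z) \le \cosh^{n_0}(z) + A\cdot e^{n_0 z}$ as formal power series with nonnegative coefficients.

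Since coefficient-wise inequalities are preserved under convolution, binomially expanding the upper bound and extracting $r!\cdot[z^r]$ will yield
\[
|W| \le \sum_{s=0}^n \binom{n}{s} A^{n-s}\cdot \phi(s),\qquad \phi(s) := r! \cdot [z^r]\cosh^{n_0 s}(z)\cdot e^{n_0(n-s) z}.
\]
The key probabilistic interpretation is $\phi(s) = \E\bigl[(n_0(n-s) + \sum_{j=1}^{n_0 s}\epsilon_j)^r\bigr]$ for i.i.d.\ Rademacher $\epsilon_j$, since $\cosh^{n_0 s}(z) e^{n_0(n-s)z}$ is the moment generating function of $n_0(n-s) + \sum_j \epsilon_j$. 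Minkowski's inequality in $L^r$ together with Khinchine's inequality then gives $\phi(s) \le (n_0(n-s) + C_0\sqrt{r\cdot n_0 s})^r \le (n_0(n-s) + C_0\sqrt{rN})^r$ for an absolute constant $C_0$. The $s = n$ summand will produce the main factor $(C_0\sqrt{rN})^r$, matching the target $(8N)^r(r/N)^{r/2}$ up to constants, while the niceness-induced slack $(1+A)^n \le \exp(nA)$ is at most $2^{2N/\sqrt{n_0}}$, using $A = 2^{\sqrt{n_0}-k_0}$ and $q\ge 2^{\tilde c/\eps^3}$.

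The hard part will be controlling the contributions from intermediate $s < n$. Writing $t = n-s$ and expanding $(n_0 t + C_0\sqrt{rN})^r$ via the binomial theorem will reduce this to bounding sums $\sigma_k = \sum_{t\ge 0} \binom{n}{t}A^t t^k$. Poisson domination gives $\sigma_k \le (1+A)^n \cdot B_k(nA) \le (1+A)^n (k+nA)^k$, where $B_k$ is the $k$-th Bell polynomial. Comparing the resulting sum to the target will split naturally into two regimes: when $r \le \eps^2 N$, the factor $\max\{1,(re/(\eps^2 N))^{r/2}\}$ in the claim equals $1$, and the intermediate contributions are absorbed by the main term times $2^{2N/\sqrt{n_0}}$; when $r > \eps^2 N$, the factor $(re/(\eps^2 N))^{r/2}$ in the claim is precisely what will absorb the growth of the $k \approx r$ terms in the binomial expansion. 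The leftover $2^{\log_2 N} = N$ slack accounts for the union over $O(r) \le O(N)$ terms.
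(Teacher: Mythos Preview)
Your approach is sound and genuinely different from the paper's. The paper argues by direct counting: it writes each $V(\cV)$ as $B+E$ with $B$ the mod-$2$ part and $E$ even, then separately counts choices for $E$ (via an entropy bound on the weight $(Np-m)/2$) and for $B$ (using $\tau$-niceness row by row, with an optimization over the number $\gamma n$ of nonzero rows), and finishes by maximizing over $\alpha = m/N$. Your route---the EGF identity $|W|=r!\,[z^r]E(z)^n$, the coefficient-wise bound $E(z)\le \cosh^{n_0}(z)+A\,e^{n_0 z}$ from niceness, the Rademacher-moment interpretation $\phi(s)=\E\bigl[(n_0(n-s)+\sum_j\epsilon_j)^r\bigr]$, and Minkowski plus Khinchine---is a clean analytic substitute for that combinatorics, and the main term $\phi(n)\le (C_0\sqrt{rN})^r$ falls out immediately rather than through an optimization. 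Conversely, the paper's decomposition makes the role of $\eps$ transparent (it enters as the cost $-\gamma\eps$ of each nonzero row of $B$, which after optimizing $\gamma$ becomes the $\log_2(1/\eps)$ factor); in your framework $\eps$ is hidden inside $A=2^{\sqrt{n_0}-k_0}$, and extracting the $\max\{1,(re/(\eps^2 N))^{r/2}\}$ factor from the tail sum is where the real work still lies.

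Two points to tighten. First, you invoke $q\ge 2^{\tilde c/\eps^3}$ to bound $(1+A)^n$, but that hypothesis belongs to Theorem~4.1, not to the claim. Fortunately the bound $(1+A)^n\le 2^{2N/\sqrt{n_0}}$ holds unconditionally: if $A\ge 1$ then $(1+A)^n\le (2A)^n=2^{\,n+N/\sqrt{n_0}-\eps N}\le 2^{2N/\sqrt{n_0}}$ since $n=N/n_0\le N/\sqrt{n_0}$; if $A<1$ then $nA<n\le N/\sqrt{n_0}$ already. Second, the last paragraph is the thin part of the plan. The inequality $B_k(nA)\le (k+nA)^k$ is not literally true (it needs a constant like $2^k$, which the $8^r$ can absorb), and the assertion that the intermediate-$s$ contributions are ``absorbed'' in each regime hides a case analysis of roughly the same delicacy as the paper's maximization over $\alpha$ and $\gamma$: one has to balance $\binom{n}{t}A^t$ against $(n_0 t)^r$ across the ranges $n_0 t\lessgtr C_0\sqrt{rN}$ and $r\lessgtr \eps^2 N$, and a single crude step (e.g.\ bounding $t\le n$ or $\binom{n}{t}A^t\le e^{nA}$) loses too much in the intermediate range $\eps^2 N\lesssim r\lesssim \eps N$. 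This is fixable, but expect the bookkeeping here to be comparable to what the paper does, not a one-liner.
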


Now we finish the proof, given \cref{claim:WIsSmall}. 
Let $\tau = 1/\sqrt{n_0}$ as in the theorem statement.
Let $\cV\in ([n])\times \Omega)^r$ be such that $g_{\cV}=0$. 
Recall that every row of $B(\cV)$ must belong to $\cin^\perp$, so, in particular, $\cV$ must belong to $W$. Hence, for a large enough $n_0$, \cref{eq:random2,claim:WIsSmall} yield 
\begin{align}
\E_{\cout,m \sim \F_q^k \setminus \set{0}}\left[ X_m^{\moment}(\cout,\Omega)  \right] &\le \sum_{\cV}\ind\left[ g_{\cV}=0 \right] \notag\\ 
&= |W| \notag\\
&\le (8N)^r \cdot (re/N)^{r/2} \cdot 2^{2N/\sqrt{n_0} + \log_2 N}\notag\\
&\leq (8N)^r \cdot (e\eps^2)^{r/2} \cdot 2^{2N\eps^2/\sqrt{\tilde{c}}} \cdot N \notag\\
&= \left( 8\sqrt{e} \cdot 2^{2/\sqrt{\tilde{c}}} \cdot  N\eps \right)^r \cdot N \notag \\
&\leq 
\left( 32\sqrt{e} \cdot N\eps \right)^r \cdot N.
\label{eq:grossconstants}
\end{align}
Above, in the third line we have used the fact that $r = \eps^2 N$ to replace the maximum in \cref{claim:WIsSmall} with $e^{r/2}$;
 in the fourth line we have plugged in $r = \eps^2 N$ and also the fact that our assumption $q \geq 2^{\tilde{c}/\eps^3}$ implies that $n_0 \geq \tilde{c}/\eps^4$; and in the last line we have assumed without loss of generality that $\tilde{c} \geq 1$.

By Markov's inequality,
\begin{align*}
&\hspace{-1cm}\Pr_{\cout}\left[\exists m\in \F_q^k\setminus \{0\},~X_m(\cout,\Omega) > c\eps N \right]\\
&\qquad=
\Pr_{\cout}\left[\exists m\in \F_q^k\setminus \{0\},~X_m^{\moment}(\cout,\Omega) > (c\eps N)^r \right] \\
&\qquad\le \sum_{m\in \F_q^k\setminus \{0\}}\Pr_{\cout}\left[X_m^{\moment}(\cout,\Omega) > (c\eps N)^r \right]\\
&\qquad\le \sum_{m\in \F_q^k\setminus \{0\}}\frac{\E_{\cout}\left[X_m^{\moment}(\cout,\Omega) \right]}{(c\eps N)^r} \\
&\qquad=
\frac{\left(q^k-1\right)\cdot \E_{\cout,m \sim \F_q^k \setminus \set{0}}\left[ X_m^{\moment}(\cout,\Omega) \right]}{(c\eps N)^r} \\
&\qquad\le
\frac{2^{r}\cdot \E_{\cout,m \sim \F_q^k \setminus \set{0}}\left[ X_m^{\moment}(\cout,\Omega) \right]}{(c\eps N)^r},
\end{align*}
where in the last line we have used that $q^k = 2^{N\eps^2} = 2^{r}.$
Plugging in \eqref{eq:grossconstants}, we see that this expression is at most
\begin{align*}
    \left( \frac{  64 \sqrt{e} \cdot  N\eps }{c \eps N }\right)^r \cdot N 
    &=  \left( \frac{64 \sqrt{e}}{c} \right)^r \cdot N.
\end{align*}
Thus if we choose $c = 128 \sqrt{e}$, we conclude that
\begin{align*}
\Pr_{\cout}[\exists m \in \F_q^k \setminus\{0\}, X_m(\cout, \Omega) > c \eps N ] &\leq  N \cdot 2^{-r}\\
&\leq  N \cdot 2^{-N\eps^2}.
\end{align*}
As we are assuming that $n_0 \geq \tilde{c}/\eps^4$, we have that
\[ n_0 \leq 2^{n_0 \eps^2}\]
as long as $\eps$ is sufficiently small relative to $\tilde{c}$, and in particular we have
\[ N = n \cdot n_0 \leq 2^{n \cdot n_0 \eps^2} = 2^{N\eps^2}\]
as we always have that $n \geq 1$.
Thus, the above reads
\[ \Pr_{\cout}[\exists m \in \F_q^k \setminus\{0\}, X_m(\cout, \Omega) > c \eps N ] \leq  2^{-N\eps^2 / 2},\]
which proves the claim after observing that without loss of generality we may take $\bar{c} \geq 2$.
\end{proof}

Finally, we prove \cref{claim:WIsSmall}.
\begin{proof}[ of \cref{claim:WIsSmall}]

Let $W' = \left\{V(\cV) : \cV\in W\right\}$. Since $V(\cV)$ preserves all the information in $\cV$ up to ordering, we have $|W|\le r! \cdot |W'|$. We turn to bound $|W'|$. 
Let $V\in W'$. Write $V = B+E$ where $B$ is a $\{0,1\}$ matrix and $E$ is a matrix whose entries are all even. By abuse of notation we also think of $B$ as a matrix in $\F_2^{n\times n_0}$.

Write $p = \frac rN$. Recall that the sum of entries of $V\in W'$ is $Np$. For $0\le m\le Np$, write $W'_m$ for the set of matrices $V=B+E\in W'$ where the weight of $B$ is exactly $m$. We proceed to bound $|W'_m|$ by separately counting the number of ways to choose $E$ and $B$.

\paragraph{Choosing $E$:}    Given a choice of $m$, the matrix $E$ has weight $Np-m$. Each entry of $E$ is even, so each non-zero entry is at least $2$. Thus, there are at most $\frac{Np-m}2$ non-zero entries. The number of ways to choose these entries is thus at most 
    $$\sum_{t=0}^{\frac{Np-m}2} \binom Nt\le 2^{N\cdot h_2(\frac{Np-m}{2N})}.$$ Subject to this choice, the number of possible matrices $E$ is at most 
    $$\binom{\frac{Np-m}2+Np-m-1}{Np-m} \le 2^{\frac{3Np}2}.$$

\paragraph{Choosing $B$:}    Let $I = \{i\in [n] : B_i \ne 0\}$, where $B_i$ stands for the $i$-th row of $B$. The number of ways to choose $I$ is at most $2^n$. Write $|I| = \gamma\cdot n$ for some $\gamma \in [0,1]$. Given $m$ and the choice of $I$, we claim that there are at most \[2^{N\gamma \cdot\left(h_2\left(\frac{m}{\gamma N}\right)-\eps+\frac 1{\sqrt{n_0}}\right)}\] ways to choose the matrix $B$. 
    
    Indeed, recall that every row of $B$ must lie in $\cin^\perp$. Let $B'\in \F_2^{n\times n_0}$ be a random matrix sampled uniformly from all matrices with weight $m$, whose set of non-zero rows is $I$. The number of such matrices is at most $\binom{n_0\cdot |I|}{m} = \binom{\gamma N}{m} \le 2^{\gamma  N\cdot h_2\left(\frac{m}{\gamma N}\right)}$. Conditioning on the weight of each row of $B'$, the fact that $\cin$ is $(1/\sqrt{n_0})$-nice implies that 
    $$\Pr\left[B'_i \in \cin^\perp \mid |B'_i| = w\right] = \frac{|\{x\in \cin^\perp : |x|=w\}|}{\binom {n_0}w}\le 2^{-n_0 \eps + \sqrt{n_0}}$$ for all $i\in I$. Since the rows of $B'_i$ are independent under this conditioning, the probability that every row of $B'$ lies in $\cin^\perp$ is at most $2^{N\gamma \cdot\left(-\eps+\frac 1{\sqrt{n_0}}\right)}$. The desired bound on the number of choices for $B'$ follows.

\paragraph{Obtaining the conclusion:}
    Overall, writing $m = \alpha N$, we conclude that
    \small
    \begin{align*}
        \frac{\log_2(|W'_m|)}N &\le \max_{\gamma \in [0,1]}\left\{\frac{3p}2+\frac nN + h_2\left(\frac{p-\alpha}{2}\right) + \gamma \cdot\left(h_2\left(\frac{\alpha}{\gamma}\right)-\eps+\frac 1{\sqrt{n_0}}\right)\right\} \\
        &\le
        \max_{\gamma \in [0,1]}\left\{\frac{3p}2+\frac nN + \frac{1} {\sqrt{n_0}}+ \left(h_2\left(\frac{p-\alpha}{2}\right) + \gamma \cdot\left(h_2\left(\frac{\alpha}{\gamma}\right)-\eps\right)\right)\right\} \\
        &\le \max_{\gamma \in [0,1]}\left\{\frac{3p}2+\frac 1{\sqrt{n_0}}+\frac nN + \frac{(p-\alpha)}2\log_2\left(\frac{2}{p-\alpha}\right) + (p-\alpha) + \alpha\log_2\left(\frac{\gamma}\alpha\right) + 2\alpha - \gamma \eps\right\} \\
        \\
        &\le \frac{5p}2+\frac 1{\sqrt{n_0}}+\frac nN + \frac{(p-\alpha)}2\log_2\left(\frac{2}{p-\alpha}\right)  + \alpha\left(\log_2\left(\frac{1}\eps\right) + 1-\frac{\ln \ln 2 + 1}{\ln 2}\right)\\
        &\le
        3p+\frac 1{\sqrt{n_0}}+\frac nN + \frac{(p-\alpha)}2\log_2\left(\frac{1}{p-\alpha}\right)  + \alpha\cdot \log_2\left(\frac{1}\eps\right).
    \end{align*}
    \normalsize
     Here, the third inequality is since $h_2(x) \le -x\log_2 x + 2x$ for all $x\in[0,1]$. The fourth inequality is by observing that $\gamma = \frac{\alpha}{\eps\ln 2}$ maximizes the expression.
     
    To bound this last expression, we observe\footnote{
     Indeed, the derivative with respect to $\alpha$ of this expression is
    \[ \frac{1}{2\ln(2)}\left(1 + \ln\left( \frac{p-\alpha}{\eps^2} \right)\right).\]
    We consider two cases, one where $p < \eps^2/e$ and one where $p \geq \eps^2/e$. 
    When $p < \eps^2/e$, 
    \[\ln\left( \frac{p-\alpha}{\eps^2}\right) 
    < \ln(1/e) = -1,\]
    and in particular the derivative is negative for all $\alpha \in [0,p)$.  This means that the maximum is attained at $\alpha = 0$.  
    On the other hand, if $p \geq \eps^2/e$, the derivative is non-negative at zero, and vanishes when $\alpha = p - \eps^2/e$, so the maximum is attained there.
    } that it is maximized at
    \[ \alpha = \max\{p-\eps^2/e, 0\}.\]
   Plugging this in, we claim that
   \[ \frac{\log_2(|W_m'|)}{N} \leq 3p + \frac{1}{\sqrt{n_0}} + \frac{n}{N} + \frac{p}{2}\log_2\left( \max\left\{\frac{1}{p}, \frac{e}{\eps^2}\right\}\right).\]
   In more detail, if $p < \eps^2/e$, then plugging in $\alpha = 0$ yields the $1/p$ term inside the maximum.  On the other hand, if $p \geq \eps^2/e$, then plugging in $\alpha = p - \eps^2 / e$, we have
   \begin{align*}
       \frac{(p-\alpha)}{2} \log_2\left( \frac{1}{p-\alpha} \right) + \alpha \cdot \log_2\left(\frac{1}{\eps}\right) 
       &= \frac{\eps^2 }{2e} \log_2\left( \frac{e}{\eps^2}\right) + \left(p - \frac{\eps^2}{e}\right) \log_2\left(\frac{1}{\eps}\right) \\
       &= \frac{\eps^2}{2 \ln(2) e} + \frac{p}{2} \log_2\left(\frac{1}{\eps^2}\right)\\
       &\leq \frac{p}{2\ln(2)} + \frac{p}{2} \log_2\left(\frac{1}{\eps^2}\right) \qquad \text{as $p \geq \eps^2/e$} \\
       &= \frac{p}{2}\log_2\left( \frac{e}{\eps^2}\right),
   \end{align*}
   which gives us the $e/\eps^2$ term inside the maximum.
    Therefore,
  \begin{align*}
      |W| &\leq (Np)! \sum_{m=0}^{Np}|W_m'| \\
      &\leq N^{Np + 1} \cdot 2^{3Np + N/\sqrt{n_0} + n} \cdot p^{Np/2} \cdot \left(\max\{1, p/\eps^2\}\right)^{Np/2} \\
      &\leq (8N)^r \cdot \left( \frac{r}{N}\right)^{r/2} \cdot 2^{2N/\sqrt{n_0} + \log_2(N)} \cdot \left( \max\left\{1, \frac{r}{\eps^2 N} \right\}\right)^{r/2},
  \end{align*}
    where in the last line we have plugged in $p = r/N$ and also used $n = \frac{N}{n_0} \le \frac{N}{\sqrt{n_0}}$.

\end{proof}

\section{A Soft-Decoding Sufficient Condition for $\cout$}\label{sec:soft}

In this section we give a sufficient condition on $\cout$ for $\cout \circ \cin$ to approach the GV bound.  This condition is similar to a \emph{soft-decoding} condition, except where the distributions for each coordinate are the same.  That is, for a particular distribution $\cD$ on $\F$ (defined in \cref{thm:suff} below), we imagine choosing a random word $x \in \F_q^n$ so that $x_\alpha \sim \cD$ for $\alpha \in [n]$ are all i.i.d.  Then we ask about the probability that $x$ lies in $\cout^\perp$.  If this probability is close to what it ``should'' be (for, say, a random linear code)  then $\cout \circ \cin$ will approach the GV bound with high probability over $\cin$.  In more detail, we prove the following theorem.

\begin{theorem}[Sufficient soft-decoding condition for $\cout$]\label{thm:suff}
There are constants $\tilc, c > 0$ so that the following holds.
Let $\eps > 0$.
  Suppose that $\cin \subseteq \F_2^{n_0}$ is a binary linear code of dimension $k_0 = \eps n_0$ that is $\tau$-nice for $\tau = 1/\sqrt{n_0}$. 
    Further assume that $n_0 \geq 64/\eps^4$.  Let $\Omega \subseteq \F_q$ be the set defined from $\cin$ as per \cref{def:Omega}.

    Let $Y \in \F$ be the random variable given by
    \[ Y = \sum_{b \in \Omega} \zeta_b \cdot b,\]
    where $\zeta_b \sim \ber\left( \frac{1 - e^{-2\tilc \eps^2}}{2}\right)$ are i.i.d.\ Bernoulli random variables, and let $\cD$ be the distribution of $Y$.  Suppose that $\cout \subseteq \F_q^n$ is a linear code of dimension $k = \eps n$ that satisfies
    \begin{equation}\label{eq:softrec} \Pr_{x \sim \cD^n}[ x \in \cout^\perp \setminus \{0\} ] \leq \frac{1}{q^k}\left(1  + \Delta\right),
    \end{equation}
    where 
    \[ \Delta \leq \left( \frac{ c \eps }{2 } \right)^{\tilc \eps^2 N + 100 \sqrt{\tilc \eps^2 N}}.\]
  
    Then, $\cout \circ \cin$ is a binary linear code of rate $\eps^2$ with relative distance at least $\frac{1 - c \eps }{2}$.  
\end{theorem}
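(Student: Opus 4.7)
The plan is to combine the moment identity from \cref{lem:moment} with a Poissonization tuned to the Bernoulli parameter $p = (1 - e^{-2\tilc\eps^2})/2$ appearing in the statement. First I would fix an even integer $r$ near $\tilc \eps^2 N$ and invoke \cref{lem:moment} to write
\[
 (q^k - 1)\,\E_m[X_m^r] \;=\; q^k\,|W| \;+\; q^k\,|T| \;-\; N^r,
\]
where $W = \{\cV : g_\cV = 0\}$ and $T = \{\cV : g_\cV \in \cout^\perp \setminus \{0\}\}$. The zero count $|W|$ is handled exactly as in the proof of \cref{thm:main-random}: \cref{claim:WIsSmall}, combined with the $\tau$-niceness of $\cin$ and the hypothesis $n_0 \geq 64/\eps^4$, yields $|W| \leq (C_1 \sqrt{\tilc}\,N\eps)^r \cdot 2^{O(N\eps^2)}$, and provided $c$ is chosen sufficiently large relative to $C_1 \sqrt{\tilc}$, this contribution fits within half of the Markov budget $(c\eps N)^r / q^k$.

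For $|T|$, I would Poissonize: let $V^\star \in \mathbb{N}^{n \times n_0}$ have independent $\poi(\tilc\eps^2)$ entries, so $|V^\star| \sim \poi(r)$, and conditioned on $|V^\star| = r$, $V^\star$ has the same law as $V(\cV)$ for $\cV$ uniform in $([n] \times \Omega)^r$. The crucial point is that under the unconditional Poisson law each entry of $B(V^\star) = V^\star \bmod 2$ is an independent $\ber(p)$ with exactly the $p$ from the statement; hence each coordinate of $g_{V^\star}$ is an independent copy of $Y \sim \cD$, and $g_{V^\star} \sim \cD^n$. The soft-decoding hypothesis \eqref{eq:softrec} then yields $\Pr_{V^\star}[g_{V^\star} \in \cout^\perp \setminus \{0\}] \leq (1 + \Delta)/q^k$.

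Translating this Poisson bound back to a bound on $|T|/N^r$ (the conditional probability given $|V^\star| = r$) requires a de-Poissonization step, dividing by $\Pr[|V^\star| = r] = \Omega(1/\sqrt{r})$ (Stirling) and using Poisson concentration of $|V^\star|$ in the narrow window $[r - 100\sqrt{r},\, r + 100\sqrt{r}]$. This introduces polynomial and mild subexponential prefactors, and these are precisely what the $100\sqrt{\tilc\eps^2 N}$ slack in $\Delta \leq (c\eps/2)^{r + 100\sqrt{r}}$ is engineered to absorb. Combining with the $|W|$ bound and applying Markov followed by a union bound over the $q^k - 1$ nonzero messages should then rule out any codeword with bias exceeding $c\eps N$, giving relative weight at least $(1-c\eps)/2$.

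The main technical difficulty I expect is the simultaneous calibration of $\tilc$ and $c$, together with careful bookkeeping of the Poisson-concentration and Stirling factors (and the $2^{O(N\eps^2)}$ factor from $\tau$-niceness), so that the combinatorial zero-count contribution and the Poissonized nonzero-count contribution fit into the Markov budget at the same time; the specific form of the bound on $\Delta$ is designed to make this balancing tight.
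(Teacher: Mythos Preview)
Your proposal tracks the paper's approach in its main ingredients: the moment identity from \cref{lem:moment}, the split of $\sum_\cV(q^k\ind[g_\cV\in\cout^\perp]-1)$ into the zero part $|W|$ (handled by \cref{claim:WIsSmall} and $\tau$-niceness) and the nonzero part $|T|$, and Poissonization to connect the nonzero count to the hypothesis \eqref{eq:softrec}. The calibration of $\tilc$ and $c$ you describe for the $|W|$ contribution essentially matches the paper's \cref{cl:first_term_small}.

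The gap is in your de-Poissonization for $|T|$. Writing $p_r := |T|/N^r = \Pr_{g\sim\cD_r}[g\in\cout^\perp\setminus\{0\}]$, you propose the crude bound $p_r \le \Pr_{\cD^n}[\cdot]/\Pr[|V^\star|=r] \le C\sqrt r\,(1+\Delta)/q^k$. Plugging this into the moment identity gives
\[
q^k|T| - N^r \;\le\; \bigl(C\sqrt r\,(1+\Delta) - 1\bigr)\,N^r \;=\; (C\sqrt r - 1)\,N^r \;+\; C\sqrt r\,\Delta\,N^r.
\]
The second summand is indeed absorbed by the $100\sqrt\lambda$ slack in $\Delta$, as you say. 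But the first summand, $(C\sqrt r - 1)N^r$, is present even when $\Delta = 0$ and must be compared against the Markov budget $(c\eps N)^r = (c\eps)^r N^r$; since $(c\eps)^r$ is exponentially small in $r$ while $C\sqrt r - 1 \ge 1$, the bound fails catastrophically. The entire argument hinges on the near-cancellation $q^k\Pr[\cdot] - 1 \le \Delta$, and \emph{any} multiplicative loss on $\Pr[\cdot]$ destroys it; the de-Poissonization loss is not a ``polynomial prefactor'' in a harmless place.

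The paper sidesteps this by never de-Poissonizing. It randomizes $r\sim\poi(\lambda)$ from the outset and uses the \emph{exact} identity $\E_r[p_r] = \Pr_{\cD^n}[\cdot]$ (\cref{cl:g_is_prod}), so the hypothesis yields $q^k\E_r[p_r]-1\le\Delta$ with no loss. It then argues existence of a good $r$ probabilistically: \cref{cl:first_term_small} plus Markov controls the zero-term for most $r$, Poisson concentration keeps $r\le\lambda+100\sqrt\lambda$ so that $(1/(c\eps))^r$ is bounded, and one concludes that some concrete $r$ makes $\cB_r(\cout,\cin) < 1$. Your plan would become essentially the paper's if you let the Poissonization \emph{select} the moment $r$ (an existence argument over the Poisson window) rather than trying to transfer the Poisson bound back to a pre-fixed $r$.
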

\begin{remark}[$\tau$-niceness of inner code]
Note that, by \cref{lem:RLC_nice}, a random linear code $\cin$ is $\tau$-nice for $\tau = 1/\sqrt{n_0}$ with probability at least $1 - 2^{-\Omega(\sqrt{n_0})}$.  Thus, \cref{thm:suff} implies that if $\cout$ satisfies \cref{eq:softrec}, then $\cout \circ \cin$ approaches the GV bound with high probability over a random $\cin$.
\end{remark}

\begin{remark}[The distribution $\cD$ is somewhat concentrated.]\label{rem:listrec}
We note that $(1 - e^{-2\tilc \eps^2})/2 = \Theta(\eps^2)$ for small $\eps > 0$.  In particular, the random variables $\zeta_b$ in the distribution $\cD$ that appears in \cref{thm:suff} is $\ber(p)$ for $p = \Theta(\eps^2)$.
This means that a ``typical''  draw from $\cD$ will be a sum of $\Theta(\eps^2 n_0)$ elements of $\Omega$, so $\cD$ has most of its mass on about $q^{O(\eps \log(1/\eps))}$ elements of $\F_q$, out of $q$.  In this sense, the condition in \cref{thm:suff} is reminiscent of a list-recovery-type condition on $\cout^\perp$, where the input lists are all the same and have size about $\ell = q^{O(\eps \log(1/\eps))}$.
\end{remark}

\begin{remark}[Codes satisfying \cref{eq:softrec} exist]\label{rem:exists}
While the eventual goal is to explicitly construct a code $\cout$ that satisfies \cref{eq:softrec}, as a proof of concept we remark that such codes do exist.
  Indeed, imagine taking a random linear code $\cout$ of dimension $k = \eps n$.  It is not hard to see that
    \[ \E_{\cout} \left[ \Pr_{x \sim \cD^n} [x \in \cout^\perp \setminus \{0\}]\right] = q^{-k} - q^{-n}.\]
    In particular, there exists a linear code $\cout$ of dimension $k$ so that
    \[ \Pr_{x \sim \cD^n}[x \in \cout^\perp \setminus \{0\}] \leq \frac{1}{q^k},\]
    satisfying the condition of \cref{thm:suff} with $\Delta = 0$.
 (Notice that in the theorem, it is okay if  $\Pr_{x \sim \cD^n}[x \in \cout^\perp \setminus \{0\}]$ is smaller than $1/q^k$, it just should not be much larger). 
\end{remark}

We prove the theorem at the end of the section, after putting a few preliminaries in place.  For the rest of the section, let $\tilc$ and $c$ be the constants in \cref{thm:suff}.

For a message $m \in \F^k \setminus \{0\}$, say that $m$ is \emph{bad} if $|X_m| \geq c \eps N$, where $X_m$ is as defined in \cref{sec:moment}.  Thus, if there are no bad messages, then the conclusion of \cref{thm:suff} holds.

For any $r \geq 0$, define
\[ \cB_r(\cout, \cin) \triangleq \frac{ q^k }{q^k - 1} \cdot \frac{1}{(c \eps N)^r} \cdot \sum_{\cV \in (\Lambda \times \Omega)^r} \left(q^k \cdot \ind[g_{\cV} \in \cout^\perp] - 1 \right).\]

\begin{observation}\label{obs:boundbad}
    For any $r$, the number of bad messages $m \in \F^k \setminus \{0\}$ is at most $\cB_r(\cout, \cin)$.
\end{observation}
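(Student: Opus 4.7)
The plan is to obtain this observation as a direct application of Markov's inequality to the $r$-th moment formula from \cref{lem:moment}. I will treat $r$ as an even integer (which is the regime in which the bound $\cB_r(\cout,\cin)$ is meaningful; every downstream application will instantiate $r$ as even, just as in the proof of \cref{thm:main-random}). For even $r$ the random variable $X_m^r$ equals $|X_m|^r$ and is therefore non-negative, and a message $m$ is bad exactly when $X_m^r \geq (c\eps N)^r$.

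Concretely, the first step is to apply Markov's inequality to $X_m^r$ under the uniform distribution on $\F_q^k \setminus \{0\}$:
\[
\Pr_{m \sim \F_q^k \setminus \{0\}}\!\left[ |X_m| \geq c\eps N \right] \;\leq\; \frac{\E_{m \sim \F_q^k \setminus \{0\}}[X_m^r]}{(c\eps N)^r}.
\]
Multiplying both sides by $q^k - 1$ converts the probability into a count of bad nonzero messages. I would then substitute the identity from \cref{lem:moment}, namely
\[
\E_{m \sim \F_q^k \setminus \{0\}}[X_m^r] \;=\; \frac{1}{q^k - 1}\sum_{\cV \in ([n]\times\Omega)^r}\!\left( q^k \cdot \ind[g_\cV \in \cout^\perp] - 1 \right),
\]
which eliminates the factor of $q^k - 1$ and yields
\[
\#\{m \neq 0 : m \text{ is bad}\} \;\leq\; \frac{1}{(c\eps N)^r}\sum_{\cV \in ([n]\times\Omega)^r}\!\left( q^k \cdot \ind[g_\cV \in \cout^\perp] - 1 \right).
\]

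The last step is cosmetic: since $\frac{q^k}{q^k - 1} \geq 1$ and the sum on the right-hand side is non-negative (because it equals $(q^k - 1)\E_m[X_m^r]$ with $X_m^r \geq 0$), the count is also bounded by $\cB_r(\cout,\cin)$ as defined in the text. There is really no obstacle here; the only thing worth pointing out in the write-up is the (implicit) evenness of $r$, which is what guarantees both that ``bad'' is characterized by a single-sided moment inequality and that the sum defining $\cB_r$ is non-negative. The observation is best viewed as a convenient packaging of Markov's inequality together with \cref{lem:moment}, setting up the moment-based framework to control the number of codewords whose bias exceeds $c\eps N$.
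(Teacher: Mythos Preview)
Your proposal is correct and follows essentially the same approach as the paper: apply Markov's inequality to the $r$-th moment, substitute the identity from \cref{lem:moment}, and absorb the harmless factor $\tfrac{q^k}{q^k-1}\ge 1$. Your explicit attention to the evenness of $r$ (so that $X_m^r=|X_m|^r\ge 0$) is a point of additional care that the paper leaves implicit; note, however, that the downstream use in \cref{sec:soft} draws $r\sim\poi(\lambda)$ rather than fixing an even $r$, so your parenthetical remark that ``every downstream application will instantiate $r$ as even'' is a slight overstatement, though this does not affect the correctness of your argument here.
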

\begin{proof}
    By \cref{lem:moment}, we have
    \[ \E_{m \sim \F^k \setminus \{0\}}[X_m^r] = \frac{1}{q^k-1} \sum_{\cV \in ([n] \times \Omega)^r} (q^k \cdot \ind[g_{\cV} \in \cout^\perp] - 1). \]
    Markov's inequality along with the definition of $\cB_r(\cout, \cin)$ implies that 
    \[ \Pr_{m \sim \F^k\setminus \{0\}} [X_m \geq C \eps N ] \leq \frac{1}{q^k} \cB_r(\cout, \cin).\]
    Thus, the total number of of bad $m$-s is bounded by $\cB_r(\cout,\cin)$, as desired.
\end{proof}

We may re-write the sum over $\cV \in ([n]\times \Omega)^r$ as an expectation over a corresponding distribution $\cD_r$ on $g \in \F_q^n$.  That is, to sample from $\cD_r$, we choose a sequence $\cV$ of pairs $(\alpha_i, b_i) \in [n] \times \Omega$ for $i \in [r]$ independently and uniformly at random; let $g \in \F_q^n$ be given by
\[ g_\alpha = \sum_{b \in \cV_\alpha} b\]
for $\alpha \in [n]$,
as we did in \cref{sec:moment}.
Thus, the above becomes
\begin{align}
\cB_r(\cout, \cin) &= \frac{ q^k }{q^k - 1} \cdot \left(\frac{N}{c \eps N}\right)^r \cdot \mathbb{E}_{g \sim \cD_r} \left(q^k \cdot \ind[g \in \cout^\perp] - 1 \right) \notag \\
&= \frac{q^k}{q^k - 1} \cdot \left( \frac{1}{c\eps } \right)^r \cdot \left( q^k \Pr_{g \sim \cD_r} [ g \in \cout^\perp] - 1 \right).
\label{eq:badcount1}
\end{align}

This inspires a nice condition for $\cout$; we want $\Pr_{g \sim \cD_r}[ g \in \cout^\perp ]$ to be about $1/q^k$.  This is reminiscent of a soft-decoding problem, but one difference is that  coordinates $g_\alpha$ for $\alpha \in [n]$ are not independent. In order to \emph{make} them independent, as they are in the statement of \cref{thm:suff}, we will first choose $r$ randomly from an Poisson distribution.  That is, 
we will choose
\[ r \sim \poi(\tilc \eps^2 N).\]
We first observe that choosing $r$ at random like this and then choosing $g \sim \cD_r$ results in the product distribution $\cD^n$ in \cref{thm:suff}.
\begin{claim}\label{cl:g_is_prod}
    Suppose that $r \sim \poi(\tilc \eps^2 N)$ and $g \sim \cD_r$.  
    The the joint distribution of the $g_\alpha$ is given by
    \[ g_\alpha  = \sum_{b \in \Omega} \zeta_{\alpha,b} \cdot b,\]
    where
    \[ \zeta_{\alpha,b} \sim \ber\left( \frac{1 - e^{-2\tilc\eps^2}}{2} \right)\]
    are i.i.d.\ Bernoulli random variables.
    In particular, the $g_\alpha$ are all independent and identically distributed for each $\alpha \in [n]$.
\end{claim}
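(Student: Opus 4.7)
The plan is to use the standard Poissonization trick. Starting from the description of $\cD_r$, note that sampling $\cV \in ([n]\times\Omega)^r$ uniformly and independently is equivalent to throwing $r$ i.i.d.\ balls into $N = n \cdot n_0$ bins indexed by $(\alpha,b) \in [n]\times\Omega$. Let $V_{\alpha,b}$ denote the count in bin $(\alpha,b)$; then $g_\alpha = \sum_{b \in \Omega} V_{\alpha,b} \cdot b$, where the sum is taken in $\F_q$ (equivalently in $\F_2^{k_0}$, so only the parities $V_{\alpha,b} \bmod 2$ matter).

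The key fact I will invoke is the classical Poissonization identity: if the total number of balls $r$ is itself $\poi(\lambda)$, then the bin counts $\{V_{\alpha,b}\}$ become mutually independent, each distributed as $\poi(\lambda/N)$. Applying this with $\lambda = \tilc\eps^2 N$, each $V_{\alpha,b}$ is an independent $\poi(\tilc\eps^2)$ variable.

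Now define $\zeta_{\alpha,b} = V_{\alpha,b} \bmod 2$. Independence of the $V_{\alpha,b}$ transfers to independence of the $\zeta_{\alpha,b}$, and each $\zeta_{\alpha,b}$ is Bernoulli with parameter
\[
\Pr[\poi(\tilc\eps^2) \text{ odd}] \;=\; \sum_{j \geq 0} e^{-\tilc\eps^2}\frac{(\tilc\eps^2)^{2j+1}}{(2j+1)!} \;=\; e^{-\tilc\eps^2}\sinh(\tilc\eps^2) \;=\; \frac{1-e^{-2\tilc\eps^2}}{2},
\]
which matches the parameter in the theorem statement. Because $g_\alpha = \sum_{b \in \Omega} \zeta_{\alpha,b}\cdot b$ depends only on the $\zeta_{\alpha,\cdot}$ row, the variables $g_1,\ldots,g_n$ are independent, and each is identically distributed with $Y$ from the theorem.

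There is essentially no obstacle here beyond correctly setting up the Poissonization and computing the parity probability of a Poisson random variable; both are routine. The only small subtlety to flag is that the sum defining $g_\alpha$ lives in $\F_q$ (characteristic two), which is precisely why passing from $V_{\alpha,b}$ to its parity $\zeta_{\alpha,b}$ loses no information about $g_\alpha$.
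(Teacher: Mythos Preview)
Your proof is correct and follows essentially the same approach as the paper: view sampling $\cV$ as throwing $r$ balls into $N$ bins, invoke Poissonization to make the bin counts i.i.d.\ $\poi(\tilc\eps^2)$, and reduce to their parities using characteristic two. Your computation of $\Pr[\poi(\tilc\eps^2)\text{ odd}]$ via $\sinh$ is in fact cleaner than the paper's (which contains a harmless typo in the intermediate sum).
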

\begin{proof}
We may view the process of choosing the $(\alpha_i, b_i)$ as dropping $r$ balls into $N$ bins, with each bin corresponding to a tuple $(\alpha, b) \in [n] \times \Omega$.  Since $r \sim \poi(\tilc \eps^2 N)$, the occupancy $Y_{\alpha, b}$ of each bin $(\alpha, b)$ is independent, and distributed as 
\[ Y_{\alpha, b} \sim \poi(\tilc \eps^2).\]
Notice that $Y_{\alpha, b}$ is the number of times that $b$ appears in the multiset $\mathcal{V}_\alpha$.  Thus,
 by definition, each $g_\alpha$ is of the form
\[ g_\alpha = \sum_{b \in \Omega} Y_{\alpha, b} b = \sum_{b \in \Omega} \zeta_{\alpha,b} b, \]
where 
\[ \zeta_{\alpha,b} = Y_{\alpha, b} \mathrm{\ mod\ } 2.\]
Above, we can replace $Y_{\alpha, b}$ with $\zeta_{\alpha,b}$ since we are working over a field of characteristic two.

Since the $Y_{\alpha, b}$ are all independent, so are the $\zeta_{\alpha, b}$.
  Further, $\zeta_{\alpha,b}$ is a Bernoulli random variable, and the probability that it is equal to one is the probability that $Y_{\alpha,b}$ is odd.  Since $Y_{\alpha,b} \sim \poi(\tilc \eps^2)$, this is
\[ \Pr[ \zeta_{\alpha,b} = 1 ] = \Pr[ Y_{\alpha, b} \text{ odd}] = \sum_{j=0}^\infty \frac{(\tilc\eps^2)^{2j} e^{-\tilc\eps^2} }{(2j)!} = e^{-\tilc\eps^2}\left( \frac{ e^{\tilc\eps^2} - e^{-\tilc\eps^2}}{2} \right) = \frac{1}{2} \left( 1 - e^{-2 \tilc\eps^2}  \right).\]
This proves the claim.
\end{proof}

Given \cref{cl:g_is_prod}, we return to \eqref{eq:badcount1}, and break it up into two terms, one corresponding to the event that $g = 0$, and one corresponding to the event that $g \neq 0$.  We have
\begin{align}
    \mathcal{B}_r(\cout, \cin) &= \frac{q^k}{q^k - 1} \left( \frac{1}{c \eps}\right)^r \cdot \left( q^k \Pr_{g \sim \cD_r} [g \in \cout^\perp] - 1 \right) \notag\\
    &=\frac{q^k}{q^k - 1} \left( \frac{1}{c \eps}\right)^r \cdot \Pr_{g \sim \cD_r}[g = 0] \cdot \left( q^k \Pr_{g \sim \cD_r}[ g \in \cout^\perp\,|\, g = 0]  - 1 \right)  \notag\\
    & \qquad \qquad +\frac{q^k}{q^k - 1} \left( \frac{1}{c\eps}\right)^r \cdot \Pr_{g \sim \cD_r}[g \neq 0] \cdot \left( q^k \Pr_{g \sim \cD_r}[ g \in \cout^\perp\,|\, g \neq 0 ] - 1 \right)\notag\\
        &= q^k\left( \frac{1}{c \eps}\right)^r \cdot \Pr_{g \sim \cD_r}[g = 0] \notag\\
    & \qquad \qquad +\frac{q^k}{q^k - 1} \left( \frac{1}{c\eps}\right)^r \cdot \Pr_{g \sim \cD_r}[g \neq 0] \cdot \left( q^k \Pr_{g \sim \cD_r}[ g \in \cout^\perp\,|\, g \neq 0 ] - 1 \right)\notag\\
         &= q^k\left(1 + \frac{1}{q^k-1}\right)\left( \frac{1}{c \eps}\right)^r \cdot \Pr_{g \sim \cD_r}[g = 0]  \label{eq:term1}\\
    & \qquad \qquad +\frac{q^k}{q^k - 1} \left( \frac{1}{c\eps}\right)^r  \cdot \left( q^k \Pr_{g \sim \cD_r}[ g \in \cout^\perp \setminus \{0\} ] - 1 \right)\label{eq:term2}
\end{align}
Above, in the second-to-last equality we simplified the first summand by noting that $\Pr_{g \sim \cD_r}[g \in \cout^\perp\,|\, g= 0] = 1$ and canceling the $q^k-1$ terms.
In the final equality, we distributed $\Pr_{g \sim \cD_r}[g \neq 0] = 1 - \Pr_{g \sim \cD_r}[g = 0]$ inside the sum in the second summand, and moved a resulting $\frac{1}{q^k-1} \left( \frac{1}{c\eps}\right)^r \Pr_{g \sim \cD_r}[g = 0]$ term to the first summand.
We handle the two terms \eqref{eq:term1} and \eqref{eq:term2} separately.  The first one we will show is small; and the second we will show is small \emph{if} $\cout$ has a particular soft-decoding-like guarantee.

\begin{claim}\label{cl:first_term_small} 
Suppose that the constants $c,\tilc$ satisfy
\[ \tilc \geq 4 \ln(2) \qquad \text{and} \qquad c \geq 72 \tilc,\]
and that $n$ is sufficiently large.
    Suppose that $r \sim \poi(\tilc \eps^2 N)$ as above, and further suppose that $n_0 \geq 64/\eps^4$.  Then the expectation over $r$ of the term \eqref{eq:term1} satisfies
    \[ \mathbb{E}_r[ \eqref{eq:term1} ] \leq 2^{-\eps^2 N/2}.\]
\end{claim}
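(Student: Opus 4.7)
The plan is to exploit Poissonization of $r$ together with the weight $(c\eps)^{-r}$, which folds these into a single i.i.d.\ Poisson measure on the entries of the multiplicity matrix $V(\cV)$. I set $\lambda = \tilc\eps^2 N$, $\mu = \lambda/(c\eps N) = \tilc\eps/c$, and let $|W_r| = \#\{\cV\in([n]\times\Omega)^r : g_\cV=0\}$, so that $\Pr_{g\sim\cD_r}[g=0]=|W_r|/N^r$. Then
\[
\E_{r\sim\poi(\lambda)}\!\left[(c\eps)^{-r}\Pr_{g\sim\cD_r}[g=0]\right] \;=\; e^{-\lambda}\sum_{r\geq 0}\frac{\mu^r}{r!}|W_r|.
\]
The key structural step is to unfold ordered tuples as multinomials over $V$, collapsing this to
\[
\sum_{r\geq 0}\frac{\mu^r}{r!}|W_r| \;=\; \sum_{V\in\mathbb{N}^{n\times n_0}}\prod_{\alpha,b}\frac{\mu^{V_{\alpha,b}}}{V_{\alpha,b}!}\ind[g_V=0] \;=\; e^{\mu N}\cdot\Pr_{V\sim\poi(\mu)^{n\times n_0}}[g_V=0].
\]

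Next, since $g_V=0$ is equivalent to $(V\bmod 2)\,G_0=0$, the probability depends only on the parity matrix $B=V\bmod 2$, whose entries are i.i.d.\ $\ber(p_\mu)$ with $p_\mu=(1-e^{-2\mu})/2$ (the argument of \cref{cl:g_is_prod} applied with effective rate $\mu$). It then factors over rows as $\Pr[B_\alpha\in\cin^\perp]^n$. Applying $\tau$-niceness gives a bound that is pleasantly free of $p_\mu$:
\[
\Pr[B_\alpha\in\cin^\perp] \;=\; \sum_{x\in\cin^\perp}p_\mu^{|x|}(1-p_\mu)^{n_0-|x|} \;\leq\; 2^{-n_0(\eps-\tau)}\sum_{i=0}^{n_0}\binom{n_0}{i}p_\mu^i(1-p_\mu)^{n_0-i} \;=\; 2^{-n_0(\eps-\tau)}.
\]
Combining yields $\sum_r\frac{\mu^r}{r!}|W_r|\leq e^{\mu N}\cdot 2^{-N(\eps-\tau)}$.

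Plugging back and using $q^k(1+\frac{1}{q^k-1})\leq 2q^k = 2\cdot 2^{\eps^2 N}$, I obtain
\[
\E_r[\eqref{eq:term1}] \;\leq\; 2\cdot 2^{\eps^2 N}\cdot e^{-\lambda+\mu N}\cdot 2^{-N(\eps-\tau)}.
\]
Taking $\log_2$ and dividing by $N$, the target $\E_r[\eqref{eq:term1}]\leq 2^{-\eps^2 N/2}$ reduces to verifying
\[
\tfrac{3}{2}\eps^2 + \tau + \tfrac{\tilc(\eps/c - \eps^2)}{\ln 2} - \eps \;\leq\; -\tfrac{1}{N}.
\]
With $\tau\leq\eps^2/8$ (from $n_0\geq 64/\eps^4$), $\tilc\geq 4\ln 2$ and $c\geq 72\tilc$, the coefficient of $\eps$ is at most $-1+1/(72\ln 2) < -48/49$, and the $\eps^2$ terms combine negatively thanks to $\tilc\eps^2/\ln 2\geq 4\eps^2$, so the left side is at most $-\tfrac{48}{49}\eps - \Theta(\eps^2)$, well under $-1/N$ for $n$ large (which is guaranteed by $n_0\geq 64/\eps^4$).

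The main conceptual hurdle is the Poissonization-plus-parity factorization in the first step: it is essential to merge the Poisson weights $\lambda^r/r!$ with the exponential penalty $(c\eps)^{-r}$ into a single i.i.d.\ Poisson measure of rate $\mu$ on the $V_{\alpha,b}$, and then to notice that only the parities of the entries matter, which converts the Poissons into Bernoullis and allows $\tau$-niceness to be applied uniformly in $p_\mu$. Once that reduction is in place, all that remains is arithmetic with the prescribed constants.
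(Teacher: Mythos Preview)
Your Poissonization idea is genuinely different from the paper's route and, in principle, cleaner: the paper bounds $\Pr_{g\sim\cD_r}[g=0]$ for each fixed $r$ via the combinatorial estimate of \cref{claim:WIsSmall} and then sums that bound against the Poisson weights, splitting the sum at $r=\eps^2N/e$. Absorbing $(c\eps)^{-r}$ into the Poisson rate to get i.i.d.\ $\poi(\mu)$ entries, and then factorizing over rows, is an elegant shortcut.

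However, the displayed inequality where you apply $\tau$-niceness is false. \cref{def:nice} bounds the number of weight-$i$ codewords of $\cin^\perp$ only for $i\in\{1,\dots,n_0\}$; the zero codeword is not covered, and its contribution to $\Pr[B_\alpha\in\cin^\perp]$ is $\Pr[B_\alpha=0]=(1-p_\mu)^{n_0}$. With $p_\mu\approx\mu=\tilc\eps/c\le \eps/72$, this is of order $e^{-n_0\eps/72}$, which is \emph{exponentially larger} than the claimed bound $2^{-n_0(\eps-\tau)}\approx e^{-0.69\,n_0\eps}$. So the zero-row event dominates, and your downstream arithmetic (which leans on the strong $-\eps$ term from $2^{-N\eps}$) does not follow.

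The fix is short and stays entirely within your framework. Writing $\Pr[B_\alpha\in\cin^\perp]\le (1-p_\mu)^{n_0}+2^{-n_0(\eps-\tau)}$ and noting that the first summand dominates (so the second introduces only a negligible multiplicative factor), use the identity
\[
e^{\mu}(1-p_\mu)=e^{\mu}\cdot\frac{1+e^{-2\mu}}{2}=\cosh(\mu)
\]
to obtain
\[
e^{-\lambda+\mu N}\,(1-p_\mu)^{N}=e^{-\lambda}\cosh(\mu)^{N}\le e^{-\lambda+\mu^2N/2}.
\]
Now $\mu^2N/2=(\tilc/c)^2\eps^2N/2$ is tiny, and the target inequality reduces to $\tfrac{3}{2}-\tfrac{\tilc}{\ln 2}+O\big((\tilc/c)^2\big)\le 0$, which holds once $\tilc\ge 4\ln 2$ (the condition $c\ge 72\tilc$ is no longer doing real work in this version). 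In short: the Poissonization insight is right; only the treatment of the all-zero row needs to be corrected.
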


\begin{proof}
First, observe that for fixed $r$, the probability that $g \sim \cD_r$ is zero is precisely what is bounded in \cref{claim:WIsSmall},
and we have
\[ \Pr_{g \sim \cD_r}[g = 0] \leq 8^r \cdot \left( \frac{r}{N} \right)^{r/2} \cdot 2^{\frac{2N}{\sqrt{n_0}} + \log_2(N)} \cdot \left( \max\left\{1, \frac{er}{\eps^2 N}\right\}\right)^{r/2}.\]
Let $\lambda = \tilc \eps^2 N$ be the mean of the Poisson distribution that $r$ is drawn from.
Taking the expectation over $r$ and plugging the result of \cref{claim:WIsSmall}, the expected value of \eqref{eq:term1} is:
\begin{align*}
    &\E_r \left[q^k \left( \frac{q^k}{q^k-1}\right)\left( \frac{1}{c\eps} \right)^r \Pr_{g \sim \cD_r} [ g = 0 ] \right]\\
    &\qquad\leq
   q^k 2^{3N/\sqrt{n_0}} \sum_{r \geq 0} \left( \frac{1}{c\eps} \right)^r 8^r \left( \frac{r}{N} \right)^{r/2} \left( \max\left\{ 1, \frac{er}{\eps^2N} \right\}\right)^{r/2} \Pr[\poi(\lambda) = r]\\
   &\qquad=  q^k 2^{3N/\sqrt{n_0}} \sum_{r \geq 0} \left( \frac{1}{c\eps} \right)^r 8^r \left( \frac{r}{N} \right)^{r/2} \left( \max\left\{ 1, \frac{er}{\eps^2N} \right\}\right)^{r/2} \left( \frac{\lambda^r e^{-\lambda}}{r!} \right)\\
   &\qquad=  e^{-\lambda} q^k 2^{3N/\sqrt{n_0}} \sum_{r \geq 0} \frac{1}{r!} \left( \frac{8\lambda \sqrt{r/N}}{c\eps}\right)^r \left( \max\left\{ 1, \frac{er}{\eps^2N} \right\}\right)^{r/2} \\
   &\qquad= e^{-\lambda} q^k 2^{3N/\sqrt{n_0}} \left( \sum_{r=0}^{\eps^2 N/e} \frac{1}{r!} \left( \frac{8\lambda \sqrt{r/N}}{c\eps} \right)^r + \sum_{r>\eps^2 N/e}
   \frac{1}{r!} \left( \frac{8\sqrt{e}\lambda r}{c\eps^2 N} \right)^r.
   \right) 
\end{align*}
Consider each of the two summations above.  The first one is bounded by
\begin{align*}
\sum_{r=0}^{\eps^2 N/e} \frac{1}{r!} \left( \frac{8\lambda \sqrt{r/N}}{c\eps}\right)^r
&\leq \sum_{r=0}^{\infty} \frac{1}{r!} \left( \frac{8\lambda \sqrt{\eps^2/e}}{c\eps}\right)^r \\
&= \exp\left(\frac{8 }{c\sqrt{e}} \cdot \lambda\right).
\end{align*}
Meanwhile, the second term is bounded by 
\begin{align*}
 \sum_{r >  \eps^2 N/e} \frac{1}{r!} \left( \frac{ 8 \sqrt{e} \lambda r}{c\eps^2 N}\right)^r 
 &\leq \sum_{r > \eps^2 N/e} \left( \frac{8 e\sqrt{e} \lambda  r }{c \eps^2 r N }\right)^r \qquad \text{using $r! \geq (r/e)^r$} \\
 &= \sum_{r > \eps^2 N/e} \left( \frac{8e\sqrt{e} \tilc }{c  } \right)^r \\
&\leq 2^{- \eps^2 N/e} < 1
 \end{align*}
 provided that $c  \geq 16e\sqrt{e}\tilc.$  
 Then, we have
\begin{align*}
     \E_r \left[q^k \left( \frac{1}{c\eps} \right)^r \Pr_{g \sim \cD_r} [ g = 0 ] \right] &\leq
    e^{-\lambda} q^k 2^{3N/\sqrt{n_0}} \left( \exp\left( 8 \lambda/c \right) + 1 \right) \\
    &\leq
    2 e^{-\lambda} q^k 2^{3N/\sqrt{n_0}}
    \exp\left( 8\lambda/c \right)  \\
    &\leq \exp\left( \ln(2)\left( 4N/\sqrt{n_0} + \eps^2 N\right) -\lambda\left( 1 -   8/c \right)\right).
\end{align*}
Above, in the final line we used the fact that $q^k = 2^{\eps^2 N}$.  Plugging in $\lambda = \tilc \eps^2 N$, we get
\begin{align*}
     \mathbb{E}_r \left[q^k \left( \frac{1}{c\eps} \right)^r \Pr_{g \sim \cD_r} [ g = 0 ] \right] &\leq
     \exp \left( \eps^2 N \left( \ln(2) - \tilc\left( 1 - 8/c \right) \right) + \frac{4\ln(2) N}{\sqrt{n_0}}  \right).
\end{align*}
Provided that
\[ \tilc \geq 4\ln(2) \qquad \text{and} \qquad c \geq 16,\]
(which both follow from the assumptions in the theorem statement)
we get
\begin{align*}
\mathbb{E}_r \left[q^k \left( \frac{1}{c\eps} \right)^r \Pr_{g \sim \cD_r} [ g = 0 ] \right] &\leq
    \exp_2\left(-\eps^2 N \left( 1 - \frac{4}{\eps^2\sqrt{n_0}} \right)\right) \\
    &\leq \exp_2\left( -\eps^2 N / 2 \right),
\end{align*}
finally using our assumption that $n_0 \geq 64/\eps^4$.  
\end{proof}

Finally we are ready to prove \cref{thm:suff}.
\begin{proof}[\ of \cref{thm:suff}]
The rate of the code $\cout \circ \cin$ follows by definition, so we only need to establish the bound on the relative distance.
By \cref{obs:boundbad}, the number of bad messages is at most $\cB_r(\cout, \cin)$ for any $r$, so it suffices to show that there exists an $r$ so that $\cB_r(\cout, \cin) < 1$.  We will choose $r \sim \poi(\lambda)$, where $\lambda = \tilc \eps^2 N$, as above.  As above, for any $r$ we can write
\[ \cB_r(\cout, \cin) = \eqref{eq:term1} + \eqref{eq:term2},\]
and \cref{cl:first_term_small} implies that
\[ \mathbb{E}_r[ \eqref{eq:term1} ] \leq 2^{-\eps^2 N /2}.\]
In particular, by Markov's inequality, with probability at least $1 - 2^{-\eps^2 N / 4}$ over the choice of $r$, we have
\begin{equation}\label{eq:term1_wellbehaved} \eqref{eq:term1} \leq 2^{-\eps^2 N / 4 }.
\end{equation}
Now we turn our attention to the term \eqref{eq:term2}:
\[ \eqref{eq:term2} = \frac{q^k}{q^k-1} \left( \frac{1}{c\eps}\right)^r \cdot \left( q^k \Pr_{g \sim \cD_r} [ g \in \cout^\perp \setminus \{0\} ] - 1 \right).\]
Note that, by \cref{cl:g_is_prod}, when $r \sim \poi(\lambda)$, the distribution $\cD_r$ is the same as the distribution $\cD^n$ from the statement of the theorem.  Thus,
by the assumption of the theorem, 
\[ q^k \Pr_{r \sim \poi(\lambda), g \sim \cD_r} [ g \in \cout^\perp \setminus \{0\} ] - 1  \leq \Delta \leq \left( \frac{ c\eps }{2} \right)^{\lambda + 100 \sqrt{\lambda}}.\]
Further, by a Chernoff bound for Poisson random variables (e.g., \cite{MU17}),
\[ \Pr[ r \geq \lambda + 100 \sqrt{\lambda} ] \leq \exp\left( \frac{-100^2 \lambda}{\lambda + 100 \sqrt{\lambda} }\right) \leq \exp(-100^2 / 2)\]
for sufficiently large $N$.
Thus, union bounding over the event that \cref{eq:term1_wellbehaved} occurs and that $r \leq \lambda + 100 \sqrt{\lambda}$, we see that with probability at least
\[ 1 - 2^{-\eps^2 N / 2} - \exp(-100^2/2) > 0,\]
over the choice of $r$, we have
\begin{align*}
\cB_r(\cout,\cin) &\leq 2^{-\eps^2 N / 4} + \frac{q^k}{q^k-1} \left( \frac{1}{c\eps}\right)^{\lambda + 100 \sqrt{\lambda}} \cdot \left( \frac{c\eps}{2}\right)^{\lambda + 100 \sqrt{\lambda}} \\
&\leq 2^{-\eps^2 N / 4} + 2^{-\tilc \eps^2 N} \\
&< 1.
\end{align*}
As this is (much) less than $1$ for sufficiently large $N$, we conclude that in particular there exists an $r$ so that $\cB_r(\cout, \cin) < 1$, which proves the theorem.
\end{proof}

\section{A High Min-Entropy Sufficient Condition for $\cout$}\label{sec:minent}


In this section, we give a second sufficient condition under which $\cout$ will be ``good'' for concatenation with a random linear inner code.  This second sufficient condition, informally, says that the codewords of $\cout$ should have ``mildly flat'' symbol distributions.

In more detail, given a word $c \in \F_q^n$, 
let $\cD_c$ denote the empirical distribution of symbols in $c$. 
That is, for $\sigma \in \F_q$, 
\[ \Pr_{\cD_c}[\sigma]=\Pr_{\alpha \sim [n]}[c_\alpha=\sigma]. \]
(For example, if $c = (\sigma, \sigma, \ldots, \sigma)$, then $\bm{c}$ is the distribution on $\F_q$ with 100\% of the mass on $\sigma$; and if $n=q$ and $c$ has one of each different symbol in $\F_q$, then $\bm{c}$ is uniform on $\F_q$).
Given a word $c \in \F_q^n$, the \emph{min-entropy} of the distribution $\cD_c$ is given by 
\[H_{\infty}(c) \triangleq -\log_{2}\max_{\sigma}\Pr_{\cD_c}[\sigma]. \]
Observe that $H_{\infty}(c) \le \log_{2}(q)$. 

Our condition will be about a smoothed notation of min-entropy, which informally allows a small $\eta$-fraction of the mass of $\cD_c$ to have high min-entropy. Formally, for some smoothness parameter $\eta > 0$, we define the \emph{smoothed min-entropy} $H_\infty^\eta$ by
\[
H_{\infty}^{\eta}(\cD_c) \triangleq \max_{\cD_{c'} : \Delta_{\mathrm{TV}}(\cD_c,\cD_{c'})  \le \eta}H_{\infty}(\cD_{c'}),
\]
where for two distributions $\cD_c$, $\cD_{c'}$, $\Delta_{\mathrm{TV}}$ is the total variation distance.

Before we state our main theorem in this section (\cref{main-entropy} below), we state a Lemma that we will eventually apply to the inner code $\cin$.



\begin{lemma}\label{cor:weights-iner}
For any $n \in \mathbb{N}$, $\frac{8}{\sqrt{n}} \le \gamma \le \frac{1}{3}$, and integers $\frac{24\log n}{\gamma} \le k \le \frac{n}{5}$ and $2^{2k/3}\le T \le 2^{(1-\gamma)k}$, a random linear code $\code \subseteq \F_2^n$ of dimension $k$ satisfies the following with probability $1-2^{-\Omega(\gamma k+\gamma^2n+\sqrt{n})}$.

For $0 \le j \le n$, let $\Delta_j$ be the number of codewords of $\code$ of Hamming weight $j$, and denote by $j^{\star}$ the minimal $j$ for which $\sum_{i=0}^{j}\Delta_i \ge T$. Then, 
\begin{enumerate}
    \item\label{it:1} It holds that $j^{\star} \ge \alpha  n$,
 for some $\alpha \ge h_{2}^{-1}\left( 1 - 2 \cdot \frac{k-\log T}{n} \right)$. 
 \item\label{it:2} It holds that 
$\frac{\sum_{i=0}^{j^{\star}}i \cdot \Delta_i}{\sum_{i=0}^{j^{\star}}\Delta_i}  \ge (1-2\gamma)j^{\star}$. 
\item\label{it:3} It holds that $\frac{\Delta_{j^{\star}+1}}{\sum_{i=0}^{j^{\star}}\Delta_i} \le 2^{\sqrt{n}}$. 
\end{enumerate}
\end{lemma}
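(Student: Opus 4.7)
The plan is to control, via concentration, the weight-spectrum quantities $\Delta_j$ and the partial sums $S_j \triangleq \sum_{i=0}^{j}\Delta_i$ of the random linear code $\code$. Since $\Pr[x \in \code] = (2^k-1)/(2^n-1)$ for every nonzero $x \in \F_2^n$, we have $\E[\Delta_j] = \binom{n}{j}(2^k-1)/(2^n-1)$, and \cref{cor:variance} immediately yields $\mathrm{Var}(\Delta_j) \le \E[\Delta_j]$ and $\mathrm{Var}(S_j) \le \E[S_j]$ (negative correlation kills the off-diagonal terms in the second moment). Thus Chebyshev- and Markov-style tails will be sharp whenever the relevant expectation is moderately large. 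The three failure terms $2^{-\gamma k}$, $2^{-\gamma^2 n}$, $2^{-\sqrt n}$ will correspond, in order, to the three items.

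For item 1, I would Markov-bound $S_{j_1}$ at $j_1 \triangleq \lceil \alpha n\rceil - 1$. The entropy inequality $\sum_{i \le j_1}\binom{n}{i} \le 2^{nh_2(\alpha)}$ together with $nh_2(\alpha) = n - 2(k-\log T)$ gives $\E[S_{j_1}] \le O(T^2/2^k)$, and since $T \le 2^{(1-\gamma)k}$, Markov yields $\Pr[S_{j_1} \ge T] \le O(T/2^k) \le 2^{-\gamma k + O(1)}$; on the complementary event, $j^\star > j_1$, i.e., $j^\star \ge \alpha n$.

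For item 2, a rearrangement transforms the target $\sum_{i \le j^\star} i\Delta_i \ge (1-2\gamma) j^\star S_{j^\star}$ into $\sum_{i\le j^\star}(j^\star - i)\Delta_i \le 2\gamma j^\star S_{j^\star}$. Splitting this sum at $\tilde{j} \triangleq \lceil (1-\gamma)j^\star\rceil$ reduces the problem to showing $S_{\tilde{j}} \le \gamma S_{j^\star}$. I would sandwich the random $j^\star$ between $\lceil \alpha n\rceil$ (from item 1) and a deterministic upper threshold $j_2$, defined as the smallest $j$ with $\E[S_j] \ge 2T$; Chebyshev applied to $S_{j_2}$ (which has mean $\ge 2T \ge 2^{2k/3+1}$, hence very large) gives $S_{j_2} \ge T$ and so $j^\star \le j_2$ with probability $1 - O(1/T)$. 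By monotonicity $S_{\tilde j} \le S_{\lceil(1-\gamma)j_2\rceil}$, and the key calculation is the uniform entropy gap $h_2(\beta) - h_2((1-\gamma)\beta) = \Omega(\gamma^2)$ for $\beta = j_2/n$ in the admissible range $[\alpha, 1/2]$ (the constraints $T \ge 2^{2k/3}$ and $k \le n/5$ keep $\alpha$ bounded away from $0$; the worst case is $\beta \to 1/2$, where $h_2'(\beta) \to 0$, handled by a second-order Taylor expansion). This gap gives $\E[S_{\lceil(1-\gamma)j_2\rceil}]/\E[S_{j_2}] \le 2^{-\Omega(\gamma^2 n)}$, so Markov forces $S_{\tilde j} \le \gamma T \le \gamma S_{j^\star}$ with failure probability $2^{-\Omega(\gamma^2 n)}$.

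Item 3 is the softest. For $j \le j_2$ one has $\binom{n}{j+1} \le 2\binom{n}{j}$ (since $j_2 \le n/2$) and $\E[S_j] = O(T)$ by definition of $j_2$, so $\E[\Delta_{j+1}] = O(T)$ uniformly in this range. Markov gives $\Pr[\Delta_{j+1} \ge 2^{\sqrt n}\cdot T] \le 2^{-\sqrt n + O(1)}$, and a union bound over the at-most-$n$ admissible values of $j$ (to handle the randomness of $j^\star$) yields $\Delta_{j^\star + 1} \le 2^{\sqrt n} T \le 2^{\sqrt n} S_{j^\star}$ with failure probability $2^{-\Omega(\sqrt n)}$. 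A final union bound over the three bad events gives the claimed joint failure probability. The main obstacle will be item 2: specifically, the calculus-based verification of the $\Omega(\gamma^2)$ entropy gap uniformly up to $\beta = 1/2$, together with the bookkeeping that converts between the random $j^\star$ and the deterministic thresholds $\alpha n$ and $j_2$.
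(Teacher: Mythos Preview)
Your proposal is correct and follows essentially the same route as the paper: both arguments feed the second-moment bound of \cref{cor:variance} into Markov/Chebyshev to control the weight spectrum, and both hinge on the quadratic entropy gap $h_2(\beta)-h_2((1-\gamma)\beta)=\Omega(\gamma^2)$ for item~2. The only organizational difference is that the paper first pins down uniform two-sided per-weight bounds $\Delta_i\in[2^{-\tau n},\,2^{\tau n}]\cdot\binom{n}{i}2^{k-n}$ for one carefully chosen $\tau$ and then reads off all three items from those (with a separate, smaller $\tau$ for item~3), whereas you apply tail bounds directly to the specific partial sums $S_{j_1}$, $S_{j_2}$, $S_{\lceil(1-\gamma)j_2\rceil}$ and introduce the deterministic upper threshold $j_2$ to decouple the random $j^\star$ from the splitting point; this is a mild repackaging, not a different idea.
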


We defer the proof to the end of the section. Our main result for this section is the following.

 
\begin{theorem}\label{main-entropy}
Fix any sufficiently small $\eps > 0$.
For any integers $k,q \in \mathbb{N}$, so that $q = 2^{k_0}$ is a power of $2$.  Let $n_0 = k_0/\eps$ and $n = k/\eps$.
Let $\cout \subseteq \F_q^n$ be an $\F_2$-linear code of rate $\eps$, and let $\cin \subseteq \F_2^{n_0}$ be a random linear code of rate $\eps$. 

Assume further that there exist constants $\bar{c}_{\gamma},\bar{c}_{\eta}$ such that for every nonzero $c \in \cout$, 
$$H_{\infty}^{\bar{c}_{\eta}\eps}(c) \ge (1-\bar{c}_{\gamma}\eps)\log q,$$
and that  $n_0 = \Omega_{\bar{c}_{\gamma}}\left( \frac{\log(1/\eps)}{\eps^2} \right)$.
Then, with probability at least \[1 - 2^{-\Omega_{\bar{c}_{\gamma}}(\eps^2 n_0 + \sqrt{n_0})}\] over the choice of $\cin$, the concatenated code $\code = \cout \circ \cin$ has relative distance $\frac{1}{2}-O_{\bar{c}_{\gamma},\bar{c}_{\eta}}(\eps)$.
\end{theorem}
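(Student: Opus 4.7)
The plan is to reduce the claim to a worst-case analysis for each nonzero outer codeword $c \in \cout$ separately, and then to combine the smoothed min-entropy hypothesis on $c$ with the weight-distribution control on $\cin$ provided by \cref{cor:weights-iner}. Fix a nonzero $c = \cout(m)$ and write $n_\sigma = |\{\alpha : c_\alpha = \sigma\}|$, so that the weight of $\cC(m)$ is $\sum_\sigma n_\sigma \cdot w(\cin(\sigma))$. Letting $n_{(1)} \ge n_{(2)} \ge \cdots$ denote the decreasing rearrangement of the $n_\sigma$'s and $w_{(1)} \le w_{(2)} \le \cdots$ the increasing rearrangement of the weights of codewords of $\cin$, the rearrangement inequality gives
\[
w(\cC(m)) \;\ge\; \sum_{i=1}^{q} n_{(i)} w_{(i)} \;\ge\; w_{(T+1)} \sum_{i>T} n_{(i)} \;=\; w_{(T+1)} \cdot n \cdot (1 - f_T),
\]
where $f_T := \sum_{i \le T} n_{(i)}/n$ is the fraction of positions of $c$ covered by the $T$ most frequent symbols, and this bound is valid for every integer $T$.

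Next I would plug in two estimates. First, the smoothed min-entropy hypothesis on $c$ says there is a distribution $\cD'$ with $\Delta_{\mathrm{TV}}(\cD_c, \cD') \le \bar{c}_\eta \eps$ and $\max_\sigma \Pr_{\cD'}[\sigma] \le q^{-(1-\bar{c}_\gamma \eps)}$; taking the top-$T$ set as a witness yields $f_T \le T q^{-(1-\bar{c}_\gamma \eps)} + \bar{c}_\eta \eps$ for every $T$. Second, I would apply \cref{cor:weights-iner} to $\cin$ with $T = 2^{k_0(1 - c_1 \eps)} = q^{1 - c_1 \eps}$ and $\gamma = c_1 \eps$ for a constant $c_1 = 2\bar{c}_\gamma$; its conclusion (item 1) says that, with probability $1 - 2^{-\Omega(\eps^2 n_0 + \sqrt{n_0})}$ over $\cin$, fewer than $T$ codewords of $\cin$ have weight below $j^\star$ with $j^\star/n_0 \ge h_2^{-1}(1 - 2 c_1 \eps^2) = \tfrac12 - O_{\bar{c}_\gamma}(\eps)$, using $1 - h_2(\tfrac12 - x) = \Theta(x^2)$. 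In particular $w_{(T+1)} \ge j^\star$.

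Setting $T^\star = q^{1 - c_1 \eps}$, the assumption $n_0 = \Omega_{\bar{c}_\gamma}(\log(1/\eps)/\eps^2)$ is precisely what makes $T^\star q^{-(1-\bar{c}_\gamma \eps)} = q^{(\bar{c}_\gamma - c_1)\eps} = 2^{-\bar{c}_\gamma \eps^2 n_0}$ at most $\eps$, so $f_{T^\star} = O_{\bar{c}_\gamma,\bar{c}_\eta}(\eps)$. Combining the two bounds yields
\[
\frac{w(\cC(m))}{N} \;\ge\; (1 - f_{T^\star}) \cdot \frac{w_{(T^\star+1)}}{n_0} \;\ge\; \bigl(1 - O(\eps)\bigr)\bigl(\tfrac12 - O(\eps)\bigr) \;=\; \tfrac12 - O_{\bar{c}_\gamma,\bar{c}_\eta}(\eps).
\]
Crucially, the event that $\cin$ has the weight-distribution properties of \cref{cor:weights-iner} is a single event that does not depend on $c$, so no union bound over $\cout$ is needed; the same ``nice'' $\cin$ works simultaneously for all nonzero outer codewords.

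The main obstacle I anticipate is tuning the threshold $T^\star$ so that two competing requirements are met at once: $T^\star$ must be large enough that the $(T^\star+1)$-th lightest codeword of $\cin$ has weight close to $n_0/2$, which forces $\log T^\star$ to be nearly $k_0$; yet $T^\star$ must also be small enough that the top-$T^\star$ mass of $\cD_c$ is only $O(\eps)$, which bounds $T^\star$ by roughly $\eps \cdot q^{1-\bar{c}_\gamma \eps}$. These two windows overlap exactly when $q^{(\bar{c}_\gamma - c_1)\eps} \lesssim \eps$, which is where the lower bound on $n_0$ in the hypothesis is used. Everything else should reduce to routine estimates with $h_2$ and the three items of \cref{cor:weights-iner}.
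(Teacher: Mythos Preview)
Your approach is correct and in fact simpler than the paper's. Both proofs begin with the same worst-case step: for each nonzero $c\in\cout$, lower-bound the weight of $c\circ\cin$ by pairing the most frequent symbols of $c$ with the lightest codewords of $\cin$ (the rearrangement inequality). From there the two arguments diverge.

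The paper identifies the explicit minimizing distribution $\bm p$ under the smoothed min-entropy constraint (mass $\eta+2^{-b}$ on one symbol and $2^{-b}$ on the next $(1-\eta)2^{b}-1$ symbols), evaluates $d(c)$ at this $\bm p$, and then needs all three items of \cref{cor:weights-iner}: item~1 to locate $j^{\star}$, item~2 to control the weighted average $\sum_{j}\Delta_j\cdot j$, and item~3 to bound the technical gap between $T$ and $T'=\ell_{j^{\star}-1}$. The final bound is $(1-2\gamma)(1-\eta)\alpha$ with $\alpha=j^{\star}/n_0$.

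You instead drop the contribution from the top $T$ symbols entirely and use only the crude estimate $\sum_i n_{(i)} w_{(i)} \ge w_{(T+1)}\cdot n\cdot(1-f_T)$. This needs only item~1 of \cref{cor:weights-iner} (to get $w_{(T+1)}\ge j^{\star}$) together with the elementary bound $f_T\le Tq^{-(1-\bar c_\gamma\eps)}+\bar c_\eta\eps$ coming from smoothed min-entropy. Your choice $T^{\star}=q^{1-2\bar c_\gamma\eps}$ is exactly the point where the two windows you identify overlap, and the hypothesis $n_0=\Omega_{\bar c_\gamma}(\log(1/\eps)/\eps^2)$ is precisely what forces $T^{\star}q^{-(1-\bar c_\gamma\eps)}\le\eps$. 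Both routes land on $\tfrac12-O_{\bar c_\gamma,\bar c_\eta}(\eps)$; yours just gets there with less machinery.
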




\begin{remark}[Do there exist good $\cout$?]\label{rem:exists2}
It is not clear (to us) whether a random linear code satisfies the min-entropy condition of \cref{main-entropy} with high probability (if it did, it would give an alternate proof of \cref{thm:main-random}).
However, as a proof of concept we note that a random linear code does satisfy the property that the symbols of every nonzero codeword are \emph{not} contained in a set of size smaller than $q^{1-\gamma}$. Note that this is necessary from any $c$ with $H_{\infty}(c) \ge (1-\gamma)\log_{2}(q)$.  

Indeed, taking a random linear code $\cout$ of dimension $k = \eps n$, the probability that a given nonzero codeword
violates that constraint is $\sum_{i=1}^{q^{1-\gamma}}\binom{q}{i}(i/q)^{n} \approx q^{-\gamma(n-q^{1-\gamma})}$.
Taking the union bound over all $q^k$ codewords in $\cout$, we get
that no nonzero codewords violates the constraint with high probability,
say for $\gamma = \frac{1}{2}\eps$, assuming $q = O(n)$.
\end{remark}

\begin{proof}[ of \cref{main-entropy}]

Throughout the proof, let 
\[ \gamma = \bar{c}_\gamma \eps  ~\text{  and  }~ \eta = \bar{c}_\eta \eps, \]
where $\bar{c}_\gamma$ and $\bar{c}_\eta$ are the constants from the theorem statement.


As $\cC = \cout \circ \cin$ is a linear code, to lower bound the distance it suffices to lower bound the minimum-weight codeword.  
To that end,
fix any nonzero codeword $c \in \cout$, and consider the word
\[ w = ( \cin(c_1), \cin(c_2), \ldots, \cin(c_n) ).\]


Order the elements of $\F_q$ as
$\sigma_0, \sigma_1, \ldots, \sigma_{q-1}$ so that 
\[ \Pr_{\cD_c}[\sigma_0] \geq\Pr_{\cD_c}[\sigma_1] \geq \cdots \geq \Pr_{\cD_c}[\sigma_{q-1}], \]
and define
\[ p_t(c) = \Pr_{\cD_c}[\sigma_t]\]
for $t \in \{0,1,\ldots, q-1\}$. 

To bound the weight of $w$, it suffices to do so under a worst-case (not necessarily linear) mapping from $\F_q$ to $\cin$.  That is, this worst-case mapping will map the most frequent symbols in $w$ to the lowest-weight codewords in $\cin$. 
Concretely for $j \in \{0,1,\ldots,n_0\}$, let
\[ \Delta_j = \{ x \in \cin\,:\, \weight(x) = j \}\]
be the number of codewords of $\cin$ of weight $j$, and let
\[ \ell_j = \Delta_0 + \Delta_1 + \cdots + \Delta_j\]
be the number of codewords of $\cin$ of weight at most $j$, with the convention that $\ell_{-1} = 0$.
Then we consider a worst-case encoding of $\F_q$ to $\cin$ so that 
the set of symbols
\[ \set{ \sigma_{\ell_{(j-1)}}, \sigma_{\ell_{(j-1)} +1}, \ldots, \sigma_{\ell_{j}-1} }\]
maps to the set of $\Delta_j$ codewords of $\cin$ of weight $j$.
Note that some of these sets may be empty. For example, $\cin$ is not expected have any small-weight nonzero codewords, so, e.g., $\Delta_1$ is very likely zero.

With this notation, we can bound the weight of $w = c \circ \cin$ by $\weight(w) \geq d(c)$, where
\begin{equation}\label{eq:entropy}
d(c)
\triangleq n \cdot \sum_{j=0}^{n_0} \left(\left( \sum_{t=\ell_{j-1}}^{\ell_j - 1} p_t(c) \right) \cdot j\right) =
n \cdot \sum_{j=d_{\mathrm{in}}}^{n_0} \left(\left( \sum_{t=\ell_{j-1}}^{\ell_j - 1} p_t(c) \right) \cdot j\right),
\end{equation}
where $d_{\mathrm{in}}$ denotes the minimum distance of $\cin$.

Next, we lower bound $d(w)$ by considering a worst-case empirical distribution $\cD_c$ for the symbols in $c$.  
That is, we will find the values for $p_t$ that minimize \cref{eq:entropy} while still corresponding to a distribution $\bm{p}$ that obeys our assumption that $H_\infty^{\eta}(\bm{p}) \geq (1 - \gamma)\log_{2}(q)$. 
For convenience, let
\[ b = (1- \gamma) \log q\]
be our bound on the smoothed min-entropy. 

Formally, we have that for all non-zero $c \in \cout$,
\begin{equation}\label{eq:opt_prob}
d(c) \geq d \triangleq \min_{\bm{p} : H_\infty^\eta(\bm{p}) \leq b} n \cdot \sum_{j=d_{\mathrm{in}}}^{n_0} \left( \left( \sum_{t=\ell_{(j-1)}}^{\ell_j - 1} p_t \right) \cdot j \right),
\end{equation}
where the minimum is over all distributions $\bm{p} = (p_0, p_1, \ldots, p_{q-1})$ so that $p_0 \geq p_1 \geq \cdots \geq p_{q-1} \geq 0$ and $H_\infty^\eta(\bm{p}) \leq b$.  

For intuition, consider what this worst-case distribution $\bm{p}$ would be if our requirement were only that the \emph{non-smoothed} entropy $H_\infty(\bm{p}) \geq b$.
Then the worst-case distribution $\bm{p}$ would be the distribution that is uniform on the set of $2^b$ symbols $\sigma$ that map to the lowest-weight codewords; in particular, we'd have $p_t = 2^{-b}$ for $t = 0, \ldots, 2^b -1$.
Given the $\eta$-smoothing allowance, then, the worst-case distribution would simply shift $\eta$ of this mass to the symbol that is mapped to the zero codeword.  This means that the worst-case distribution $\bm{p}$ 
is the one given by values $p_t(b)$ so that:
\[ p_t(b) = \begin{cases} \eta + 2^{-b} & t = 0 \\
2^{-b} & t = 1, \ldots, (1 - \eta)2^b - 1 \\
0 & \text{else} \end{cases}\]

%

For a reason that will be apparent soon, we will in fact work with a slightly smaller entropy bound $b' < b$. Towards defining $b'$, set
\[
T \triangleq (1-\eta)2^{b},
\]
and let
\[ j^{\star} = \min\left\{ j \leq n_0 : \ell_{j^{\star}} \geq T\right\}.\]
We then set $b'$ so that $$\ell_{j^{\star}-1} = (1-\eta)2^{b'} \triangleq T'.$$ Clearly, the worst-case $p_{t}(b')$ is even \emph{worse} than $p_{t}(b)$. Also note that $\frac{T}{T'} \le 1 + \frac{\Delta_{j^{\star}}}{\ell_{j^{\star}-1}}$.

Now that we know what the worst-case $\bm{p}$ looks like, and we still need to bound the value $d$ in \cref{eq:opt_prob}.
Given our expression for $\bm{p}$, we see that 
\begin{equation}\label{eq:d_intermediate} d \geq n \sum_{j=d_{\mathrm{in}}}^{j^{\star} - 1} \left( \left( \sum_{t = \ell_{(j-1)}}^{\ell_j - 1} 2^{-b} \right) \cdot j \right)
=
\frac{n}{2^{b'}} \sum_{j=d_{\mathrm{in}}}^{j^{\star} - 1}(\Delta_j \cdot j)
,\end{equation}


Next, we will apply \cref{cor:weights-iner} with our choice of $\gamma$, and $T'$. 
We conclude that with probability at least
\[ 1 - 2^{-\Omega(\gamma k_0 + \gamma^2 n_0 + \sqrt{n_0})} = 1 -2^{-\Omega_{\bar{c}_\gamma}(\eps^2 n_0 + \sqrt{n_0})} \]
over the choice of $\cin$, the favorable case holds.
(Note that we could indeed apply \cref{cor:weights-iner},
since $\gamma \ge \frac{8}{\sqrt{n_0}}$ and $k_0 \ge \frac{24\log n_0}{\gamma}$ by our lower bound on $n_0$.)

The first conclusion of \cref{cor:weights-iner} implies that $j^{\star} - 1 =\alpha n_0$  for some $\alpha \in [0,1]$ that satisfies
\begin{align*}
\alpha &\geq h_{2}^{-1}\left( 1-2 \cdot \frac{\log q - \log T'}{n_0} \right) = 
h_{2}^{-1}\left( 1-2 \cdot \frac{\log q - \log T + (\log T - \log T')}{n_0} \right) \\
&\ge h_{2}^{-1}\left( 1-2 \cdot \frac{\log q - \log T + \sqrt{n_0}+1}{n_0}\right) \ge 
h_{2}^{-1}\left( 1-2 \cdot \frac{\log q - \log T }{n_0} - \frac{2}{\sqrt{n_0}}\right),
\end{align*}
where the bound on $\log T - \log T'$ follows from the third conclusion of \cref{cor:weights-iner}. Further,
\begin{align*}
\alpha &\geq  h_{2}^{-1}\left(1-2 \eps \cdot \left( \gamma + \frac{\log\left( \frac{1}{1-\eta} \right)}{\log q}\right) - \frac{2}{\sqrt{n_0}} \right) \\
&\ge h_{2}^{-1}\left( 1 - 2\eps\left( \gamma + \frac{2\eta}{ \log q} \right)-\eps \right) \ge  h^{-1}_{2}(1-4\eps\gamma) \ge \frac{1}{2}-\sqrt{\gamma \eps \ln 2}
\end{align*}
where we have used the fact that $h_{2}^{-1}(1-x^2) \le \frac{1}{2}-\frac{\sqrt{\ln 2}}{2}x$ (see, e.g., \cite[Lemma B.2.4]{GRS}), and that $n_0 \ge 4\eps^{-2}$. We also used the fact that we may assume that $\eps \le \frac{1}{2}$ and $q$ is larger than some constant depending on $\bar{c}_\gamma,\bar{c}_{\eta}$. 

Next, the second conclusion of \cref{cor:weights-iner} implies that 
\begin{align*}
 \sum_{j=0}^{j^\star - 1} (\Delta_j ) \cdot j &\geq 
 (1 - 2\gamma) \cdot (j^\star - 1)  \cdot \left(\sum_{j=0}^{j^\star - 1} \Delta_j \right) 
\geq (1 - 2\gamma) \cdot   \alpha n_0 \cdot T',
 \end{align*}
 Thus, returning to \cref{eq:d_intermediate}, we have
  \begin{align*}
 \frac{d}{n_0 \cdot n} &\geq \frac{1}{n_0 \cdot 2^{b'}} \sum_{j=0}^{j^\star - 1} (\Delta_j \cdot j) \\
 &\geq (1 - 2\gamma)\cdot \alpha \cdot \frac{T'}{2^{b'}} = (1-2\gamma) \cdot \alpha \cdot (1-\eta),  
 \end{align*}
 using the definition of $T'$ in the final line.
 Since $N = n_0 n$ is the length of the code $\cC = \cout \circ \cin$, altogether 
 we have that the relative distance of $\cC = \cout \circ \cin$ is at least
 \begin{align*} \frac{d}{N} &\geq (1 - 2\gamma)(1 - \eta)\alpha \\
 &\geq (1 - 2\gamma)(1 - \eta) \left( \frac{1}{2} - \sqrt{ \gamma \eps \ln 2 } \right) \\
 &\geq \frac{1}{2} - \left( \sqrt{\bar{c}_\gamma \ln 2} +  \bar{c}_\gamma + \frac{1}{2} \bar{c}_\eta \right) \eps = \frac{1}{2} - O_{\bar{c}_\gamma, \bar{c}_\eta}(\eps),
 \end{align*}
 as desired.

\end{proof} 

We are left with proving \cref{cor:weights-iner}.

\begin{proof}[ of \cref{cor:weights-iner}]
Similarly to what we did in the proof of \cref{thm:main-random}, we will work with codes whose weight distribution deviates only slightly from that of a random code. While \cref{thm:main-random} required only an upper bound on the weights, here we also need a lower bound. 

Write $p = 2^{-(n-k)}$ and $p' =2^{-(n-k)}\cdot \left(1-2^{-k}\right)$ and fix $1\le i\le n$. Observe that 
$$\binom ni \cdot p' \le \E_{\cC}[\Delta_i] = \binom ni\cdot \frac{2^{k}-1}{2^n-1} \le \binom ni \cdot p$$
and 
$$\Var_{\cC}[\Delta_i] \le \binom ni \cdot p,$$
where the latter follows from  \cref{cor:variance}.
Fix a $\tau > 0$ to be determined soon.
Markov's inequality now yields 
$$
\Pr_{\cC}\left[\Delta_i \ge 2^{\tau n} \cdot \binom ni\cdot p\right] \le  {2^{-\tau n}}
$$
and Chebyshev's inequality yields
$$
\Pr\left[\Delta_i \le \frac{\binom ni\cdot p}{2^{\tau n}} \right] \le \frac{1}{(1-2^{-\tau n}-2^{-k})^2\cdot \binom ni\cdot p}.
$$
Fixing any $1\le \ell\le \frac n2$ and taking a union bound, it holds with probability at least $1 - \frac n{2^{\tau n}} - \frac{n}{(1-2^{-\tau n}-2^{-k})^2\cdot \binom n{\ell}\cdot p}$ that \begin{equation}\label{eq:weights-inerUpperBound}
    \Delta_i \le 2^{\tau n}\cdot \binom ni\cdot p \quad \text{ for all }1\le i\le n
\end{equation} 
and 
\begin{equation}\label{eq:weights-inerLowerBound}
    \Delta_i \ge \frac{\binom ni\cdot p}{2^{\tau n}}\quad \text{for all } \ell\le i\le \frac n2.
\end{equation}
Note that \cref{eq:weights-inerUpperBound} means that $\code^{\perp}$ is $\tau$-nice, in the sense of \cref{def:nice}.
Take 
\[
\tau = \min\set{\frac{1}{8}\gamma^2\cdot \left(h_2^{-1}\left(1-\frac kn\right)\right)^2,\frac{k\cdot \gamma}{n} -\frac{\log n}n},~\ell = \left\lfloor{h_2^{-1}\left(1-\frac 23\cdot \frac{k}{n}\right)n}\right\rfloor.\] 

We bound the probability that \cref{eq:weights-inerLowerBound,eq:weights-inerUpperBound} then hold simultaneously, according to which of the two terms above minimize $\tau$. If it is the first one, 
then the probability is at least
$$
1 - \frac{n}{2^{\frac{1}{8}\gamma^2\cdot \left(h_2^{-1}\left(1-\frac kn\right)\right)^2\cdot n}} - \frac{n}{\left(1-2^{-\frac{1}{8}\gamma^2\cdot \left(h_2^{-1}\left(1-\frac kn\right)\right)^2\cdot n}-2^{-k}\right)^2\cdot \binom n\ell\cdot p} \triangleq 1-\delta_1-\delta_2.
$$
To bound $\delta_1$, we use the fact that $h_{2}^{-1}(1-x) \ge \frac{1}{2}-5x^2$ whenever $x \le \frac{1}{4}$  \cite[Lemma B.2.4]{GRS}, and get $n\delta_1 \le 2^{-(\gamma^2/128)n}$, since $h_{2}^{-1}(1-\frac{k}{n}) \ge \frac{1}{4}$ follows from our upper bound on $k$.
To bound $\delta_2$, first note 
that $1-2^{-\gamma^2\cdot \left(h_2^{-1}\left(1-\frac kn\right)\right)^2\cdot n}-2^{-k} \ge \frac{1}{2}$ since $\gamma$ is large enough.
Next, $\binom{n}{\ell} \ge \frac{1}{\sqrt{2n}}2^{(1-\frac{2k}{3n})n}$,
so $\binom{n}{\ell}p \ge \frac{1}{\sqrt{2n}}2^{k/3} \ge 2^{k/4}$ (using our lower bound on $k$). We then have $\delta_2 \le n \cdot 2^{-k/4}$,
and overall, $\delta_1 + \delta_2 \le 2^{-(\gamma^{2}/256)n}+2^{-k/6}$, again using our lower bound on $k$ and the fact that $\gamma$ is large enough.


Next, we bound the probability that \cref{eq:weights-inerLowerBound,eq:weights-inerUpperBound} in the case where $\tau = \frac{k \cdot \gamma}{n} - \frac{\log n}{n}$, namely,
\[
1 - \frac{n}{2^{\left( \frac{k \cdot \gamma}{n} - \frac{\log n}{n} \right)n}}
- \frac{n}{\left( 1 - 2^{-\left( \frac{k \cdot \gamma}{n} - \frac{\log n}{n} \right)n} - 2^{-k} \right)^{2} \cdot \binom{n}{\ell} \cdot p} \triangleq 1 - \delta'_1 - \delta'_2
\]
In this case, $\delta'_1$ can be upper bounded 
by $2^{-(\gamma/4)k}$ since $\frac{\log n}{n} \le \frac{k \cdot \gamma}{n}$
and by our lower bound on $k$. For $\delta'_2$, we again use the
fact that $\binom{n}{\ell}p \ge 2^{k/4}$, and we also  have that
$1-2^{-\tau\cdot n}-2^{-k} \ge \frac{1}{2}$, again from our lower bound on $k$. Thus, $\delta'_1 + \delta'_2 \le 2^{-\Omega(\gamma k)}$.

Assuming \cref{eq:weights-inerLowerBound,eq:weights-inerUpperBound} hold, we show that they imply \cref{it:1,it:2}, and begin with \cref{it:1}.
By \cref{eq:weights-inerUpperBound}, in order to bound $j^\star$, it suffices to solve the following equation for $j$:
\[
\sum_{i=0}^{j}\binom{n}{i}2^{-(n-k-\tau n)} \ge T.
\]
Writing $j = \alpha n$  and using standard bounds on the sum of binomial coefficients, we need to find the smallest $\alpha$ for which $\alpha n$ is an integer, and
\[
2^{h_{2}(\alpha)n - \frac{1}{2}\log n - 1} \cdot 2^{-n+k+\tau n} \ge T.
\]
Thus, the $\alpha$ for which $j^{\star} = \alpha n$ satisfies
\[
\alpha \ge h_{2}^{-1}\left( 1 - \tau - \frac{k-\log T+\log n}{n}  \right)\ge h_{2}^{-1}\left( 1 - 2\cdot \frac{k-\log T}{n}  \right),
\]
since $\tau \le \frac{k\cdot \gamma}{n} - \frac{\log n}n \le \frac{k-\log T - \log n}n.$

We now prove \cref{it:2}. Let $j' = \lceil\alpha\cdot(1-\gamma)\cdot n\rceil$. By \cref{eq:weights-inerUpperBound},
$$A \triangleq \sum_{i=0}^{j'-1}\Delta_i \le \sum_{i=0}^{j'-1} \binom ni\cdot 2^{-n+k+\tau n} \le 2^{n\left(h_2\left(\alpha(1-\gamma)\right) + \tau + \frac kn - 1\right)}.$$
By \cref{it:1} and our assumption that $T \ge 2^{\frac {2k}3}$, 
$$j^{\star} = \alpha \cdot n  \ge h_2^{-1}\left(1-2\cdot \frac{k-\log T}n\right)\cdot n \ge h_2^{-1}\left(1-\frac 23\cdot \frac kn\right)\cdot n\ge \ell.$$
Hence, \cref{eq:weights-inerLowerBound} yields
$\Delta_{j^{\star}} \ge \binom ni\cdot p\cdot 2^{-\tau n}$, and so, 
$$B \triangleq \sum_{i=j'+1}^{j^*} \Delta_i \ge \Delta_{j^{\star}} \ge \binom n{j^{\star}}\cdot p\cdot 2^{-\tau n}\ge 2^{n\cdot\left(h_2(\alpha)-\frac {\log(2n)}{2n}-1+\frac kn-\tau\right)}.$$
Therefore,
$$\frac{\sum_{i=0}^{j^{\star}} i\cdot \Delta_i}{\sum_{i=0}^{j^{\star}} \Delta_i} \ge \frac{j'\cdot \sum_{i=j'}^{j^{\star}}\Delta_i}{\sum_{i=0}^{j^{\star}}\Delta_i} = j'\cdot\frac{B}{A+B} \ge (1-\gamma)\cdot j^{\star}\cdot \frac{1}{1+\frac AB}.$$
Now,
\begin{align*}
\frac {\log \left(\frac AB\right)}n &\le h_2(\alpha(1-\gamma)) - h_2(\alpha) + 2\tau + \frac{\log(2n)}{2n} \\  &\le h_2(\alpha(1-\gamma)) - h_2(\alpha) + 4\tau &\text{by the lower bound on $\gamma$} \\ &\le -(\alpha \gamma)^2 + 4\tau &\text{since $h_{2}'(x) \ge 0$ and $h_{2}''(x) \le -2$}\\
&\le -(\alpha \gamma)^2 + \frac{1}{2} \gamma^2\cdot  \left(h_2^{-1}\left(1-\frac kn\right)\right)^2
\\ &\le  -\frac{1}{2} \gamma^2 \cdot \left(h_2^{-1}\left(1-\frac kn\right)\right)^2 \triangleq -\theta. &\text{by \cref{it:1} and the assumption }T\ge 2^{\frac {2k}3}.
\end{align*}
We conclude that
$$ \frac{\sum_{i=0}^{j^{\star}} i\cdot \Delta_i}{\sum_{i=0}^{j^{\star}} \Delta_i} \ge \frac{(1-\gamma)j^{\star}}{1+2^{-\theta n}}.$$
All that is left is to show that $2^{-\theta n} \le \gamma$. Indeed,
this easily follows from our lower bound on $k$.

Finally, let us prove \cref{it:3}. Following the same reasoning as above, for some $\tau > 0$, we have that
$
\Delta_{j^{\star}+1} \le 2^{\tau n}\cdot \binom{n}{j^{\star}+1}\cdot p
$
and
$
\Delta_{i} \ge 2^{-\tau n}\cdot \binom{n}{j^{\star}}\cdot p
$
with probability at least 
\begin{equation}\label{prob:it3}
1 - \frac {1}{2^{\tau n}} - \frac{1}{(1-2^{-\tau n}-2^{-k})^2\cdot \binom {n}{j^{\star}}\cdot p},
\end{equation} where $p = 2^{-n+k}$.
Now, we readily get
\[
\frac{\Delta_{j^{\star}+1}}{\sum_{i=0}^{j^{\star}+1}\Delta_i} \le 
\frac{\Delta_{j^{\star}+1}}{\Delta_{j^{\star}}} \le 
\frac{2^{\tau n}\cdot \binom{n}{j^{\star}+1}\cdot p}{2^{-\tau n}\cdot \binom{n}{j^{\star}}\cdot p} \le 2^{2\tau n} \cdot \frac{n-j^{\star}}{j^{\star}+1} \le 2^{2\tau n + 2},
\]
using the fact that $j^{\star} \ge \frac{1}{4}n$.
Set $\tau = \frac{1}{4\sqrt{n}}$. The above bound is thus at most $2^{\sqrt{n}}$, and
the success probability, \cref{prob:it3}, is at least
\[
1-2^{-\frac{1}{4}\sqrt{n}} - \frac{1}{4} \cdot 2^{-k/4},
\]
where we used $1-2^{-\tau n} - 2^{-k} \ge \frac{1}{2}$, and 
\[
\binom{n}{j^{\star}}2^{-n+k} \ge 2^{\left( h_{2}(\alpha) -1 + \frac{k}{n} - \frac{2\log n}{n} \right)n} \ge 2^{\left(  \frac{k}{n} - 2 \cdot \frac{k-\log T}{n} - \frac{2\log n}{n} \right)n} \ge 
2^{\left(  \frac{k}{3n} - \frac{2\log n}{n} \right)n} \ge 2^{-k/4}.
\]




\end{proof}

\subsection*{Acknowledgements}
We thank Amnon Ta-Shma for helpful and interesting discussions, and
collaboration at the beginning of this work. 
We thank Arya Mazumdar for pointing out~\cite{BJT01} and for helping us understand its implications.
This work was done partly while the authors were visiting the Simons Institute for the Theory of Computing.

\bibliographystyle{alpha}
\bibliography{refs}

\end{document}